\newtheorem{identity}[theorem]{Identity}
\newtheorem{theoremABC}{Theorem}
\renewcommand{\thetheoremABC}{\Alph{theoremABC}} 
\title{Fourier-based Inversion of Partial X-ray Transforms in $n$ Dimensions}
\author{%
Murdock~G.~Grewar\thanks{Department of Materials Physics, Research School of Physics, Australian National University, Canberra, ACT 2600, Australia.}%
}
\begin{document}

\setlength{\tabcolsep}{4pt}

\maketitle

\begin{abstract}
We present two theorems describing analytic left-inverses of partial X-ray transforms. The first theorem concerns X-ray data collected with an arbitrary distribution of parallel projections; it contains a convolution-backprojection formula and a backprojection-convolution formula for recovering the transformed volume, provided the data is sufficient. The second theorem concerns X-ray data collected with a cone-beam; it contains a backprojection-convolution formula for recovering the transformed volume, provided the data is amenable to this method (for example: $(n-1)$-dimensional source loci that `surround' the reconstruction support; detectors of finite size are supported). These theorems are the outcome of a modestly general and rigorous investigation undertaken into the existence of backprojection-convolution methods in $n-$dimensional space. Necessary and sufficient conditions on the experiment geometry are established for the existence of such methods, as are the particular error metrics minimised by backprojection-convolution methods and the uniqueness of those minimum-error solutions. A major practical outcome of this work is the production of the first known exact inversion methods for cone-beam geometries where the X-ray source point loci are multidimensional, such as (in 3D) a cylinder or a sphere of X-ray source positions. A separate article describes a practical computer implementation for the case of a cylinder in 3D.
\end{abstract}

\begin{keywords}
computed tomography, x-ray transform, filtered backprojection, backprojection-filtration, exact reconstruction
\end{keywords}

\begin{MSCcodes}
45Q05, 
92C55, 
44A05 
\end{MSCcodes}

\section{Introduction}\label{sec:introduction}

The X-ray transform maps a function $\mu : \mathbb R^n \rightarrow \mathbb R$ into the collection of all the line integrals of that function \cite{natterer2001mathematics}. The transform is so named for its use in computerised tomography (CT), wherein it maps a 3-dimensional attenuating body to a space of theoretical measurements which could be obtained by passing X-rays through that body. In that case, $\mu$ is termed the `attenuation coefficient' field, and the X-ray transform is a theoretical model for X-ray transmission experiments. A left-inverse to the X-ray transform recovers from the measurement data the volume from whence it came---this kind of reconstruction is the basic goal of transmission CT, and provides the motivation for the theoretical study of inverses to the X-ray transform.

In transmission CT experiments, measurements are not taken along every possible line through a volume--- the X-ray transform is restricted to the lines of measurement included in the experiment, hence our nomenclature `partial X-ray transform'. There are two prototypical varieties of partial X-ray transforms relevant to experiments:
\begin{itemize}
    \item what we will call \emph{parallel-beam} X-ray transforms, wherein a line is included if and only if every parallel line is also included.
    \item what we will call \emph{cone-beam} X-ray transforms, wherein a line is included if and only if it intersects $X \subset \mathbb R^n$, where $X$ is predetermined.
\end{itemize}
The former kind corresponds to experiments conducted at high-energy facilities (e.g. synchrotrons), where a parallel beam of X-rays is used to scan the object at a variety of angles. The latter kind corresponds to experiments conducted in the laboratory, and in medical and industrial settings, and the set $X$ is the locus of points at which the X-ray source is placed and measurements through the object collected. A typical choice of $X$ is a helix wrapping around the object. We are simplifying the latter case somewhat, as in reality the cone-beam transforms are further restricted by the finite size of the X-ray detector (we have accounted for this in our inversion method).

The study of inversion theory of the X-ray transform, or partial versions of it, is perhaps best known through the work of \emph{A. M. Cormack} \cite{cormack1963, cormack1964, cormack1982}, dating 1963 onwards which, together with the work of \emph{G. N. Hounsfield} on imaging the human brain \cite{hounsfield1980computed}, led to a shared 1979 \emph{Nobel Prize in Physiology or Medicine ``for the development of computer assisted tomography''}.
Initially, analytic inversion theory was confined to reconstruction from parallel-beam data. (Or, of course, data from a cone-beam or other sort of geometry may be rearranged into a parallel-beam format, at large computational expense.) The parallel-beam inversion theory is relatively simple due to one \emph{Fourier slice theorem}.\footnote{
   The earliest application of the Fourier Slice Theorem to signal analysis is not known by us. 
   An early reference is \cite{bracewell1956strip}.
} Besides this, it seems numeric methods were used to tackle the inverse problem more-or-less as a large but generic linear algebra problem \cite{gordon1970algebraic} (ART; 1970), which was later improved to minimise a certain error function of the reconstruction at an enormous expense of computation time\footnote{
The relative computational expense of SIRT is related to the size of the reconstruction, and it may have been very adequate at the time.} \cite{gilbert1972iterative} (SIRT; 1972); see also \cite{goitein1972three} (1972) for a more generic error minimisation algorithm, and \cite{scales1987tomographic} for the `conjugate gradient method' (CGM; 1987) applied in tomography. Much later, a method was published to speed up ART by exploiting parallel-beam geometry \cite{andersen1984simultaneous} (SART; 1984). To combat measurement noise, an alternative method of reconstruction was proposed which again came at great computational expense \cite{lange1984reconstruction} (EMTR; 1984). 

Notably, there also came a very popular reconstruction method for cone-beam scans \cite{feldkamp1984practical} (FDK; 1984) in which the scanning locus $X$ is a circle. In this author's opinion, the outstanding contribution of \cite{feldkamp1984practical} was to rephrase an impractical reconstruction formula derived from the Radon transform \cite{radon1917_english} into a convolution-backprojection formula, which is fast to compute on modern hardware due to the Fast Fourier Transform. Though the FDK method is only able to correctly reconstruct that portion of the object residing in the plane containing the circle (cf. sufficiency conditions proven in \cite{smith1985}), the method remains widely known and is still sometimes used to produce (inaccurate) 3D reconstructions because of its simplicity and because its convolution-backprojection structure grants exceptional computational performance. 
Another highly-influential result is that of \emph{Colsher} in 1980 \cite{colsher1980fully} for 3D reconstruction in Positron Emission Tomography (PET) \cite{defrise2005image}. That result provides an inversion method for a particular variety of parallel-beam X-ray transforms, and is appealing for its backprojection-convolution structure.

It wasn't until 1987--1991 when a theoretical development was made by \emph{Grangeat} \cite{grangeat1987analyse,grangeat1991mathematical} linking cone-beam projection data to a derivative of the Radon transform \cite{radon1917_english} that fully 3D cone-beam reconstruction became common, and has since been referred to as cone-beam computed tomography (CBCT). In mathematical terms: \emph{Grangeat} established a singular-value decomposition for the projection operator which had left- and right-singular vectors that related to the Radon transforms of the measurements and volume respectively. 
An exact method for reconstruction from cone-beam data was made available in 1995 \cite{tam1995exact} based on this theoretical development, after which many more surfaced, e.g. \cite{danielsson97:proceedings_3D_1997,defrise2000solution,katsevich2002theoretically,katsevich2003general,ye2005general}.
Based as they are on the work of \emph{Grangeat}, these methods are limited to source loci $X$ that are one-dimensional curves through 3D space, because they rely on differentiation of measurement data along the parameter of that curve. 
Once again, the method that has best stood the test of time is structured similarly to a convolution-backprojection algorithm \cite{katsevich2002theoretically,katsevich2003general} (typically it is referred to as a `filtered backprojection' method, because `filtered' can refer to more general operations than a convolution).
Indeed, it is hard to imagine any type of inversion algorithm that could exceed the speed of convolution-backprojection type methods at large scale. 

\begin{figure}[t]
\begin{center}
    \begin{subfigure}{0.23\linewidth}
    \centering
    \includegraphics[width=0.8\linewidth]{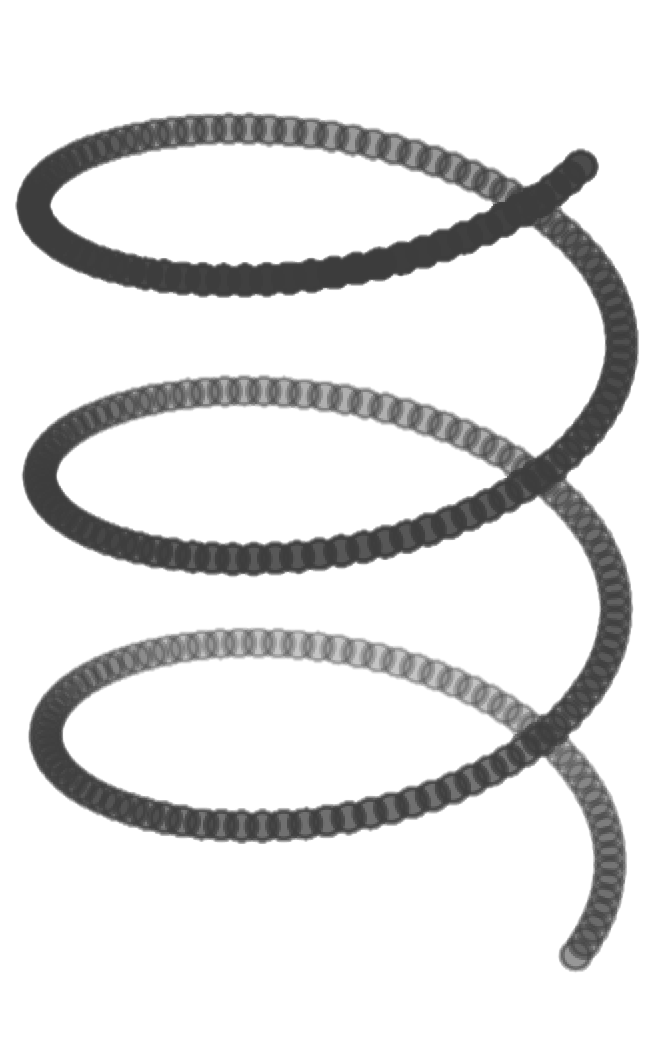}
    \caption{a helix}
    \label{fig:trajectories_helix}
    \end{subfigure}%
    \begin{subfigure}{0.23\linewidth}
    \centering
    \includegraphics[width=0.8\linewidth]{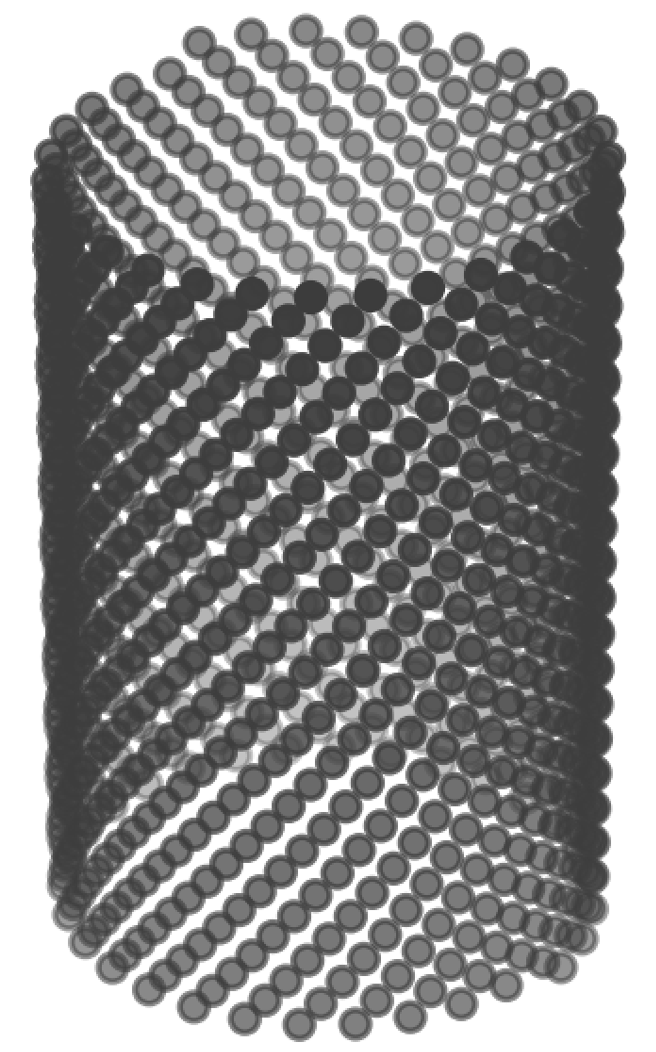}
    \caption{a cylinder (\cite{kingston2018space})}
    \label{fig:trajectories_sft}
    \end{subfigure}%
    \begin{subfigure}{0.23\linewidth}
    \centering
    \includegraphics[width=0.8\linewidth]{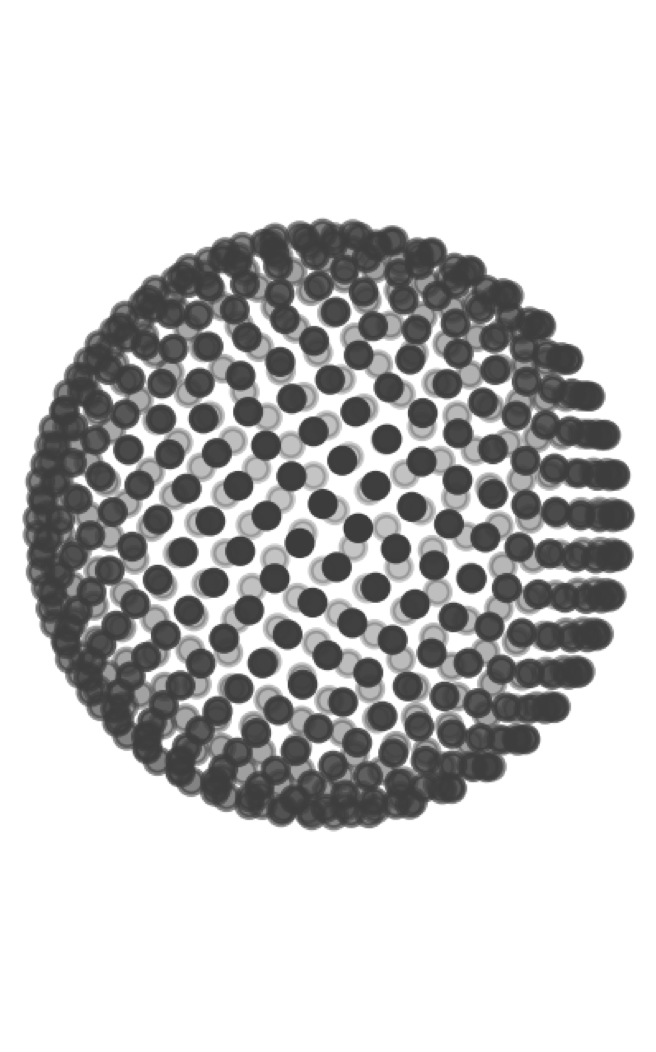}
    \caption{a sphere}
    \label{fig:trajectories_lds_sphere}
    \end{subfigure}
    \caption{Examples of acquisition trajectories in 3D CBCT. The discrete points are the locations at which the X-ray source is placed as transmission measurements are made. The latter two trajectories are multidimensional and the associated measurement data are amenable to direct reconstruction using theorem~\ref{th:cone_gbpf}. An example implementation of a reconstruction algorithm operating on the cylindrical locus is given in \cite{grewar2024preprint}.}
    \label{fig:trajectories}
\end{center}
\end{figure}

If one were to choose the ideal source locus $X$ based on purely geometric considerations, i.e. without being constrained by the computational difficulty of the inversion problem or the logistics of physically acquiring the data, then one would choose $X$ to take on a shape that spreads itself more-or-less evenly around the object. This ensures greater statistical independence between measurements, and therefore greater data/X-ray-dose efficiency. In 3D CBCT, this implies that $X$ should be a $2-$dimensional surface residing in 3D space. This has been explored previously in the literature, e.g. with $X$ shaped like the surface of a cylinder (see fig.~\ref{fig:trajectories_sft}) in \cite{noo1997} (1997) and more recently in \cite{kingston2018space} (2018), or with $X$ shaped like the surface of a sphere (see fig.~\ref{fig:trajectories_lds_sphere}) in \cite{bauer2021} (2021). However, the drawback of such geometries is that---until now---an analytic method of inversion has not been known. Therefore, no fast method of reconstruction has been available, and iterative methods have generally been used, such as \cite{myers2016rapidly} for the cylinder. 

Indeed, a notable absence from the literature is any analytic method of reconstruction from cone-beam data collected with a source locus $X$ that is more general than a one-dimensional curve, e.g. the aforementioned cylinder or sphere, despite the data efficiency of such trajectories. 
Similarly for the parallel-beam case, we are unaware of a published method that describes how to directly reconstruct from parallel-beam X-ray transforms with arbitrary distributions of beam directions, e.g. an even sampling of directions around the sphere. (We are only aware of the particular cases of the circular scan and the `spheroannular' scan in \cite{colsher1980fully}.) 

It is clear from historical precedent and the existence of the Fast Fourier Transform that convolution-backprojection and backprojection-convolution style methods are performant. 
This article undertakes a modestly general and rigorous examination of Fourier-based inversion of parallel and cone-beam partial X-ray transforms in $n$ dimensions, for all integers $n \geq 2$. We identify the precise conditions on the experiment geometry which enable backprojection-convolution methods to be constructed (and conversely, when such methods cannot be constructed). In mathematical terms: we explore the requirements on the experiment geometry that admit Fourier singular-value decompositions of partial X-ray transforms, i.e. singular-value decompositions wherein the left- and right-singular vectors are Fourier plane waves in the measurement and volume spaces respectively.
We also specify a class of error metrics that can be minimised by these methods,\footnote{We have not explored the possible error metrics exhaustively. As a particular example: there may be additional available error metrics that can be minimised with backprojection/convolution methods in the case of parallel-beam circular scans for $n\geq2$.} and prove the uniqueness of these minimum-error solutions. For the parallel-beam case, we also do the same for convolution-backprojection methods. Our examination culminates in two theorems: one for the parallel-beam case, and one for the cone-beam.
The theorems directly yield three varieties of new inversion methods:
\begin{itemize}
    \item convolution-backprojection methods for arbitrary parallel-beam X-ray data.
    \item backprojection-convolution methods for arbitrary parallel-beam X-ray data.
    \item backprojection-convolution methods for certain kinds of cone-beam datasets (particularly, for source loci $X$ which are multidimensional and surround the reconstruction support).
\end{itemize}
One major practical outcome of this investigation is the category of backprojection-convolution methods for cone-beam data. For the avoidance of doubt: these methods are compatible with real experiments, where the X-ray detector surface is a rectangle of finite size. 

In research separate to this article, we have produced a performant direct inversion algorithm for the geometry where $X$ is a cylinder, based on the theory developed herein. That involves the introduction of several `discretisation regularisations' and extends well beyond the scope of the present article. As expected, the method is efficient and accurate. That work is available in \cite{grewar2024preprint}. 


\subsection{The problem statement}\label{sec:intro_error}

The basic theoretical model of transmission CT is determined by the Bouguer-Beer-Lambert law \cite{mayerhofer2020bouguer}. In this model, the vector of line attenuations $m$---otherwise known as the \emph{linearised measurements}---may be obtained experimentally. These measurements are related to the volume by $m = A \mu$, where $A$ is a linear integral operator and a partial form of the total X-ray transform.

The central problem of transmission CT is to reconstruct $\mu$ from our knowledge of $A$ (the experiment geometry) and $m$ (the transformed measurement data). In other terms, the problem is to produce a left-inverse to the operator $A$---or at least a computer algorithm that implements such an operator---which may then be applied to $m$ to yield $\mu$. 
The inverse problem is soluble only when the experiment geometry, encoded by $A$, includes a sufficient collection of measurement lines for the assumed support of $\mu$ \cite{smith1985}. 
When it is soluble, there still remains ambiguity in the proper estimate for $\mu$ because the measurements $m$ are collected experimentally and therefore include measurement errors \cite{nuyts2013modelling} (i.e. $m \not \in \mathrm{range}(A)$ due to noise in $m$).
For this reason, it is proper that we investigate how the reconstruction may be chosen to minimise chosen error metrics. We find that for certain error metrics and imaging geometries, the reconstruction can be produced via backprojection-convolution methods (and convolution-backprojection methods in the parallel-beam case).

\vfill
\pagebreak
\subsection{Overview of the new inversion theory}

The theoretical framework developed in this article culminates in two major results. These results are given formally as the two theorems in \S\ref{sec:statement_of_results}. In \S\ref{sec:examples}, we demonstrate how the theorems may be applied to produce exact inversion methods, e.g. for the cylindrical and spherical scanning trajectories in cone-beam tomography. Finally, the theorems are proven in \S\ref{sec:proofs}.

Let us now provide a brief overview of how the theorems are derived. 
First, suppose we define an inner product $\langle \cdot , \cdot \rangle$ on the space of possible measurement vectors, i.e. the space of which $m$ is a member. We define our left-inverse $A^{-L}$ as that operator which minimises an error metric dependent on $\langle \cdot , \cdot \rangle$ viz.
\begin{equation}
    A^{-L} m = \mathrm{argmin}_x ||m - Ax|| = \mathrm{argmin}_\mu \langle m - Ax , m - Ax \rangle \, .
\end{equation}
The solution, which is proven in lemma~\ref{lem:min_norm}, is $A^{-L} = (A^\dag A)^{-1} A^\dag$. This operator is known more widely in generic linear algebra terms as the Moore-Penrose left pseudoinverse of $A$ (e.g. see \cite{ben2006generalized}). It depends on the choice of inner product, because the inner product defines the adjoint operator $A^\dag$. This adjoint is often called the \emph{backprojection operator}. 

First, we consider the case of \emph{parallel-beam} partial X-ray transforms.
In $3$ dimensions, we refer to the measurement data parallel to a given direction as a `2D projection set'. It is already known (e.g. see \cite{kak2001principles}) from the Fourier Slice Theorem that, assuming a `reasonable' $L^2$-like inner product $\langle \cdot , \cdot \rangle$, the backprojection of a 2D projection set results in a volume whose Fourier transform is the essentially the same as that of the 2D projection set, except embedded as a 2D plane within the 3D Fourier space of the volume. A variant of this statement is proven in lemma~\ref{lem:fst}, for a large class of inner products. By summing over all the projection directions, it follows that the Fourier transform of the full backprojection $A^\dag m$ is a summation of these planes; this is described precisely in corollary~\ref{cor:ATA_decomp}, for the aforementioned `large class of inner products'. This corollary is then specialised to those particular `reasonable' inner products, resulting in the more specialised but useful result in lemma~\ref{lem:ATA_decomp_simplified}. 
Notably, lemma~\ref{lem:ATA_decomp_simplified} provides an identity for $A^\dag A$ as a volume convolution with a kernel that is determined by an $n$-dimensional generalisation of the Funk transform \cite{funk1911flachen}. 
With lemma~\ref{lem:ATA_decomp_simplified} established, the inversion formula $A^{-L} = (A^\dag A)^{-1} A^\dag$ is understood as a backprojection-convolution method. To convert it to a convolution-backprojection method, identity~\ref{id:vol_proj_filt} is established, which facilitates reordering of the convolution so that it occurs in projection/measurement space, prior to backprojection.
In this manner, a general convolution-backprojection formula is derived for inverting an arbitrary \emph{parallel-beam} partial X-ray transform, provided it is in fact invertible. The specific choice of inner product affects both the convolution kernel and the error function minimised by the resulting reconstruction.
The specifics are codified in theorem~\ref{th:parallel_gbpf}.

Finally, we consider the case of \emph{cone-beam} partial X-ray transforms. We deal with only a restricted class of such transforms. In particular: we assume that the measurement data can be `rebinned' into parallel-beam data, because in this case it becomes possible to construct backprojection-convolution methods of reconstruction. The rebinning is not an explicit operation in the method, but is accomplished implicitly as part of the backprojection procedure by incorporating a weighting term that doesn't incur significant complexity or computational expense. For this reason, backprojection must precede convolution, so we revert to the backprojection-convolution method. Also, our theorem for the cone-beam case `bakes in' an important discretisation regularisation that is necessary for numerical stability when the method is implemented in a computer algorithm. The inversion formula for these cone-beam experiments is codified in theorem~\ref{th:cone_gbpf}.



\pagebreak

\subsection{Notation}\label{sec:notation}
Throughout this article, the following notational conventions are followed.

\begin{itemize}
\item The Euclidean inner product between vectors $r_1, r_2 \in \mathbb R^n$ is denoted $r_1 \cdot r_2$.

\item The sphere $S^{n-1}$ is considered to be the subset of $\mathbb R^n$ consisting of vectors with unit length, under the usual Euclidean norm. We use the vectors in $S^{n-1}$ to specify the directions of projection lines. 

\item \textbf{Anonymous functions.} Consider the function that squares a number and adds $1$. Conventionally, this function would be defined and denoted viz.
    \begin{equation*}
        \mathrm{sqp1} : \mathbb R \rightarrow \mathbb R^+, \qquad \mathrm{sqp1}: x \mapsto x^2 + 1\, .
    \end{equation*}
    This article benefits from a more powerful and succinct notation. We write the function like so:
    \begin{equation*}
        \mathrm{sqp1} = \llbracket \mathbb R \rightarrow \mathbb R^+ : x \mapsto x^2+1\rrbracket \quad \text{or} \quad \mathrm{sqp1} = \llbracket_{\mathbb R}^{\mathbb R^+} x \mapsto x^2+1 \rrbracket \, ,
    \end{equation*}
    or simply $\mathrm{sqp1} = \llbracket x \mapsto x^2+1 \rrbracket$ where the domain and codomain are assumed to be clear from context.
\end{itemize}

\section{Statement of results}\label{sec:statement_of_results}
The main results of this article are theorems~\ref{th:parallel_gbpf},~\ref{th:cone_gbpf}.
First we list definitions in \S\ref{sec:definitions_01} that are required to state the theorems concisely. The theorems are presented in \S\ref{sec:theorems_01}.
Examples of how to use these theorems to construct inversion algorithms are given in \S\ref{sec:examples}. 
All proofs are contained in \S\ref{sec:proofs}.

\subsection{Definitions}\label{sec:definitions_01}
We label the space of attenuating volumes $\mathcal A$, and the space of measurement sets $\mathcal M$. We define the X-ray transform, $A : \mathcal A \rightarrow \mathcal M$, and an $n$-dimensional generalisation of the Funk transform \cite{funk1911flachen}, which we call $\mathcal I$.

The set of directed lines is labelled $M$. A directed line is specified by the tuple $(\hat \theta, r)$ which encodes the direction $\hat \theta \in S^{n-1}$ and a point $r \in \mathbb R^n$ included in the line.
\begin{definition}[$M$: measurement line space]
    $M$ is the space of directed lines through $\mathbb R^n$. It is the quotient of $S^{n-1} \times \mathbb R^n$ by the equivalence relation
    \begin{equation}
        (\hat \theta_1, r_1) \equiv (\hat \theta_2, r_2) \quad \mathrm{iff} \quad \left( \hat \theta_1 = \hat \theta_2 \quad \mathrm{and} \quad r_1 - \hat \theta_1 (\hat \theta_1 \cdot r_1) = r_1 - \hat \theta_2 (\hat \theta_2 \cdot r_2) \right) \, .
    \end{equation}
\end{definition}
We will sometimes need to refer to the subset of lines that point in a common direction.
\begin{definition}[$M_{\hat \theta}$: single-direction measurement line space]
    $M_{\hat \theta}$ with $\hat \theta \in S^{n-1}$ is defined as the space of directed lines through $\mathbb R^n$ that are parallel to $\hat \theta$. It is constructed as the quotient of $\mathbb R^n$ by the equivalence relation
    \begin{equation}
        r_1 \equiv r_2 \quad \mathrm{iff} \quad  
        \quad r_1 - \hat \theta(\hat \theta \cdot r_1) = r_2 - \hat \theta(\hat \theta \cdot r_2) \, .
    \end{equation}
\end{definition}
We may now define $\mathcal A$ and $\mathcal M$.
\begin{definition}[$\mathcal A$: attenuation field space]\label{def:atten_field_space_unweighted}
    $\mathcal A$ is the complex Hilbert space consisting of functions $\mathbb R^n \rightarrow \mathbb C$ and equipped with the $L^2$ inner product. 
\end{definition}
\begin{definition}[$\mathcal M_{\hat \theta}$]
    $\mathcal M_{\hat \theta}$ is defined as the complex Hilbert space consisting of functions $M_{\hat \theta} \rightarrow \mathbb C$ equipped with the $L^2$ inner product.
\end{definition}
\begin{definition}[$\mathcal M$: measurement field space]
    $\mathcal M$ is defined as the complex Hilbert space consisting of functions $M \rightarrow \mathbb C$ and equipped with the $L^2$ inner product.
\end{definition}

The eponymous \emph{(total) X-ray transform} is defined in our formalism as follows:
\begin{definition}[$A$: X-ray transform]\label{def:xray_transform}
    The (total) X-ray transform, denoted $A$, is defined as the following linear integral operator:
    \begin{align}
    A = \Biggl\llbracket_{\mathcal A}^{\mathcal M} \mu \mapsto \Bigl\llbracket_M^{\mathbb C} (\hat \theta, r) \mapsto \int_{-\infty}^{\infty} \mathrm d s \mu(r + s \hat \theta) \Bigr\rrbracket \Biggr\rrbracket \, .
    \end{align}
\end{definition}

We require a few ancillary definitions so that we may concisely state the reconstruction formulae given by theorems~\ref{th:parallel_gbpf},~\ref{th:cone_gbpf}. 
Frequently, we will write integrations that take place over $(n-1)$-dimensional subspaces of $\mathbb R^n$ that are orthogonal to a given direction $\hat \theta \in S^{n-1}$. This motivates the following definition.
\begin{definition}[$\hat \theta^\perp$: subspace orthogonal to $\hat \theta$]
    $\hat \theta^\perp$, where $\hat \theta \in S^{n-1}$, is defined as
    \begin{equation}
    \hat \theta^\perp = \left\lbrace r \in \mathbb R^n : r \cdot \hat \theta = 0 \right\rbrace \, .
    \end{equation}
\end{definition}
We next define an $n$-dimensional generalisation of the Funk transform \cite{funk1911flachen}. This transform takes a distribution on $S^{n-1}$ and produces a new one. 
\begin{definition}[$\mathcal I$: generalised Funk transform]
    $\mathcal I[d]$, where $d$ is a real-valued distribution over $S^{n-1}$, is defined as
    \begin{equation}
        \mathcal I[d] = \Biggl\llbracket_{S^{n-1}}^{\mathbb R} 
            \hat \theta \mapsto \oint_{\mathrlap{\{\hat \theta' \in S^{n-1} : \hat \theta' \cdot \hat \theta = 0\}}} \; \mathrm d^{n-2} \hat \theta' \;\;\;\, d(\hat \theta')
        \Biggr\rrbracket\, .
    \end{equation}
    The integration takes place over the vectors of unit length that are orthogonal to $\hat \theta$. This is an integration over a $(n-2)$-dimensional sphere, with the usual integration measure inherited from $\mathbb R^n$. In the case of $n = 2$, the integration is understood as a summation over the two vectors of unit length that are orthogonal to $\hat \theta$.
\end{definition}
As will be expressed in the theorems in \S\ref{sec:theorems_01}, data sufficiency mandates that the distribution $d$ be chosen such that $\mathcal I[d](\hat \theta) \neq 0$ for all $\hat \theta \in S^{n-1}$. The choice of $d$ is also constrained to be compatible with the experiment geometry, but is otherwise free to be chosen to influence the error metric that will be minimised by the reconstruction. To be compatible with the experiment geometry, the support of $d$ must be contained within the set of line directions for which measurement data is actually available. In this manner, the theorems apply not only to the total X-ray transform, but also partial X-ray transforms.

We also have a series of Fourier transforms which are applied to the volume and measurement data. For formal correctness, we need to define separate transformations to operate on the different domains. 
\begin{definition}[$F_{\mathcal A}$: Volume Fourier Transform]\label{def:vol_fourier_transform}
    We define the operator $F_{\mathcal A} : \mathcal A \rightarrow \mathcal A$, and its inverse, as the unitary $n-$dimensional Fourier transform:
    \begin{align}
        F_{\mathcal A}^{\pm 1} = \Biggl\llbracket_{\mathcal A}^{\mathcal A}
                \mu &\mapsto \Bigl\llbracket_{\mathbb R^n}^{\mathbb C}
            k \mapsto (\sqrt{2\pi})^{-n} \int_{\mathbb R^n} \mathrm d^n r \,
                \mu(r) e^{\mp i r \cdot k}
        \Bigr\rrbracket \Biggr\rrbracket \, .
    \end{align}
\end{definition}
\begin{definition}[$F_{\hat \theta}$: Single-Direction Projection Fourier Transform]\label{def:spd_ft}
    Let $\hat \theta \in S^{n-1}$. 
    We define the operator $F_{\hat \theta} : \mathcal M_{\hat \theta} \rightarrow \mathcal M_{\hat \theta}$, and its inverse, as the unitary $(n-1)$-dimensional Fourier transform:
    \begin{align}
        F_{\hat \theta}^{\pm 1} = \Biggl\llbracket_{\mathcal M_{\hat \theta}}^{\mathcal M_{\hat \theta}} 
        p &\mapsto 
        \Bigl\llbracket_{\hat \theta^\perp}^{\mathbb C} k_\perp \mapsto (\sqrt{2\pi})^{-(n-1)} \int_{\hat \theta^\perp}
        \mathrm d^{n-1} r_\perp p(r_\perp) e^{\mp ir_\perp \cdot k_\perp}
        \Bigr\rrbracket
        \Biggr\rrbracket \, .
    \end{align}
\end{definition}
\begin{definition}[$F_{\mathcal M}$: Projection Fourier Transform]\label{def:full_proj_ft}
    We define the operator $F_{\mathcal M} : \mathcal M \rightarrow \mathcal M $, and its inverse, as a unitary $(n-1)$-dimensional Fourier transform. It operates like $F_{\hat \theta}$, except it does so on all directions $\hat \theta$ at once. Formally, $F_{\mathcal M}$ is defined by:
    \begin{align}
        F_{\mathcal M}^{\pm 1} = \Biggl\llbracket_{\mathcal M}^{\mathcal M}
        m &\mapsto \Bigl\llbracket_{M}^{\mathbb C} (\hat \theta, k_\perp) \mapsto (\sqrt{2\pi})^{-(n-1)} \int_{\hat \theta^\perp}
        \mathrm d^{n-1} r_\perp m(\hat \theta, r_\perp) e^{\mp ir_\perp \cdot k_\perp}
        \Bigr\rrbracket
        \Biggr\rrbracket \, .
    \end{align}
\end{definition}

We also define a notation to express diagonal operators.
\begin{definition}[Diagonal operators]\label{def:diag}
    We use the following notation to define diagonal operators. The domains and codomains are to be understood from context.
    \begin{align}
        \mathrm{diag}_x\left( \underbrace{f(x)}_{\text{an expression involving $x$}} \right) = \Biggl\llbracket 
            g \mapsto \Bigl\llbracket y \mapsto f(y) g(y) \Bigr\rrbracket
        \Biggr\rrbracket \, .
    \end{align}
\end{definition}

\subsection{Theorems}\label{sec:theorems_01}
The two theorems below comprise the main results of this article. Their proofs are contained in \S\ref{sec:proofs}. Examples of how they can be applied to transmission tomography experiments are given in \S\ref{sec:examples}.
\subsubsection{Theorem for parallel-beam inversion}
\begin{theoremABC}[A left-inverse for parallel-beam partial X-ray transforms]\label{th:parallel_gbpf}
    Let $m \in \mathcal M$.
    Let $d$ be a real, nonnegative distribution on $S^{n-1}$.

    If and only if the generalised Funk transform of $d$ is everywhere-nonzero, i.e. 
    \begin{equation}\label{eq:th:gbpf_existence}\tag{\thetheoremABC:sufficiency}
        \forall \hat \theta \in S^{n-1}, \qquad \mathcal I[d](\hat \theta) \neq 0 \, , 
    \end{equation}
    then there is a unique $\mu \in \mathcal A$ satisfying the minimum-error equation
    \begin{equation}\label{eq:th:gbpf_err}\tag{\thetheoremABC:objective}
        \mu = \mathrm{arg~min}_{\mu' \in \mathcal A} \oint_{\mathrlap{S^{n-1}}} \; \mathrm d^{n-1} \hat \theta \int_{\mathrlap{\hat \theta^\perp}} \mathrm d^{n-1} r_\perp \; d(\hat \theta) \left(\left(m - A\mu'\right)(\hat \theta, r_\perp) \right)^2 \, ,
    \end{equation}
    given by $\mu = A^{+_d} m$ where $A^{+_d}$ can be expressed as a backprojection-convolution as follows:
    \begin{equation}\label{eq:th:gbpf_inv_formula}\tag{\thetheoremABC:BPC}
        A^{+_d} = 
        \underbrace{
        F_{\mathcal A}^{-1} \mathrm{diag}_k\left(\frac{|k|}{2 \pi \mathcal I[d](\hat k)}
            \right) F_{\mathcal A}
        }_{\text{volume deconvolution}} \;
            \underbrace{\Biggl\llbracket_{\mathcal M}^{\mathcal A} p \mapsto \Bigl\llbracket_{\mathbb R^n}^{\mathbb C}
                r \mapsto \oint_{\mathrlap{S^{n-1}}} \; \mathrm d^{n-1} \hat \theta \; d(\hat \theta) p(\hat \theta, r) \Bigr\rrbracket
            \Biggr\rrbracket}_{\text{weighted backprojection}} \, ,
    \end{equation}
    where $\hat k = k/|k|$.
    The operator $A^{+_d}$ may also be expressed as a convolution-backprojection, as follows:
    \begin{equation}\label{eq:th:fbp_inv_formula}\tag{\thetheoremABC:CBP}
        A^{+_d} = \underbrace{\Biggl\llbracket_{\mathcal M}^{\mathcal A} p \mapsto\Bigl\llbracket_{\mathbb R^n}^{\mathbb C}
                r \mapsto \oint_{\mathrlap{S^{n-1}}} \; \mathrm d^{n-1} \hat \theta \; d(\hat \theta) p(\hat \theta, r)
            \Bigr\rrbracket\Biggr\rrbracket}_{\text{weighted backprojection}}
        \underbrace{F_{\mathcal M}^{-1} \mathrm{diag}_k\left(\frac{|k|}{2 \pi \mathcal I[d](\hat k)}
            \right) F_{\mathcal M}}_{\text{projection deconvolution}} \, .
    \end{equation}
\end{theoremABC}
Example applications of theorem~\ref{th:parallel_gbpf} are given in \S\ref{sec:examples_parallel}. It is seen from those examples that the reconstruction formulae \eqref{eq:th:gbpf_inv_formula} and \eqref{eq:th:fbp_inv_formula} reduce to previously-known formulae in the appropriate circumstances.

\subsubsection{Theorem for cone-beam inversion}

The cone-beam case is mathematically not very different from the parallel-beam case. The distinction is the manner in which the X-ray transform is restricted, and the way the data is arranged in practice. Theorem~\ref{th:parallel_gbpf} can be applied to cone-beam data if there is sufficient measurement data collected, $m(\hat \theta, r_\perp)$, over a variety of projection lines that contain the equivalent data from a parallel-beam transform. There is nothing fundamentally different about the cone-beam variant of the theorem, in this sense.

In order to appreciate the difference of the cone-beam variant of the theorem, practical concerns must be considered. The measurement data is no longer naturally arranged in the format $m(\hat \theta, r_\perp)$ because the measurement lines from an X-ray source point diverge; the data is collected on a 2D detector which measures these divergent rays. So, the $2$-dimensional convolution over parallel measurement lines demanded by the convolution-backprojection formula \eqref{eq:th:fbp_inv_formula} is actually a rather complicated transformation of the real underlying data format. Thus, the practical appeal of convolution-backprojection is largely lost in the cone-beam case. Fortunately, the backprojection-convolution formula \eqref{eq:th:gbpf_inv_formula} can be salvaged.

To facilitate the practical application of \eqref{eq:th:gbpf_inv_formula} to data collected with a cone-beam source, we reparameterise the measurement lines from $m(\hat \theta, r)$ to $m(\hat \theta, x)$ which refers to the directed measurement line pointing in the direction of $\hat \theta \in S^{n-1}$ and containing $x \in \mathbb R^n$, where $x$ varies over the cone-beam source points. 
The expression for the weighted backprojection in theorem~\ref{th:parallel_gbpf}, applied to the measurement field $m$, is
\begin{equation}
            \Biggl\llbracket_{\mathbb R^n}^{\mathbb C} r \mapsto \oint_{\mathrlap{S^{n-1}}} \; \mathrm d^{n-1} \hat \theta \; d(\hat \theta) m(\hat \theta, r) \Biggr\rrbracket \, .
\end{equation}
With parallel-beam data, it is natural to integrate over $\hat \theta$ computationally, because the 2D measurement sets each correspond to a fixed projection direction $\hat \theta_i$, and so an integration over $\hat \theta$ amounts to a summation over the 2D measurement sets. In the cone-beam case however, each 2D measurement set corresponds to a source point $x$. 
For this reason, we seek to change the integration coordinates from $\hat \theta \in S^{n-1}$ to the source points $x \in \mathbb R^n$. The outcome of this process is essentially the content of theorem~\ref{th:cone_gbpf}.

\begin{theoremABC}[A left-inverse for cone-beam partial X-ray transforms]\label{th:cone_gbpf}
    Let $m_{\pm} \in \mathcal M$ and let $m_{\pm}$ satisfy the symmetry: $m_{\pm}(\hat \theta, r) = m_{\pm}(-\hat \theta, r)$. 
    Let $d_{\pm}$ be a nonnegative real distribution on $S^{n-1}$ and let $d_{\pm}$ satisfy the symmetry: $d_{\pm}(\hat \theta) = d_{\pm}(-\hat \theta)$.
    Let $\chi$ be a nonnegative real distribution on $\mathbb R^n$ (to be regarded as a density function of source points).
    Assume that the `amenability condition' holds:
    \begin{equation}\label{eq:th:cone_gbpf_amenability}\tag{\thetheoremABC:amenability}
        \forall r \in \mathbb R^n, \hat \theta \in S^{n-1} \quad \left(
            \int_{-\infty}^\infty \mathrm d s \; |s|^{n-1} \chi(r + s \hat \theta) = 0 
            \quad \Rightarrow \quad 
            d_{\pm}(\hat \theta) m_{\pm}(\hat \theta, r) = 0 
        \right) \, .
    \end{equation}
    Then if and only if the generalised Funk transform of $d_{\pm}$ is everywhere-nonzero, i.e.
    \begin{equation}\label{eq:th:cone_gbpf_existence}\tag{\thetheoremABC:sufficiency}
        \forall \hat \theta \in S^{n-1}, \qquad 0 \neq \mathcal I[d_{\pm}](\hat \theta) \, ,
    \end{equation}
    then there is a unique $\mu \in \mathcal A$ satisfying the minimum-error equation
    \begin{equation}\label{eq:th:cone_gbpf_err}\tag{\thetheoremABC:objective}
        \mu = \mathrm{arg~min}_{\mu \in \mathcal A} \oint_{\mathrlap{S^{n-1}}} \; \mathrm d^{n-1} \hat \theta \int_{\mathrlap{\hat \theta^\perp}} \mathrm d^{n-1} r_\perp \; d_{\pm}(\hat \theta) \left(\left(m_{\pm} - A\mu\right)(\hat \theta, r_\perp) \right)^2 \, ,
    \end{equation}
    given by 
    \begin{equation}\label{eq:th:cone_gbpf_inv_formula}\tag{\thetheoremABC:BPC}
        \mu \! = \!  
        \underbrace{
        F_{\mathcal A}^{-1} \mathrm{diag}_k\left(\frac{|k|}{2 \pi \mathcal I[d_{\pm}](\hat k)}
            \right) F_{\mathcal A}
        }_{\text{volume deconvolution}} \!
            \underbrace{\Biggl\llbracket_{\mathrlap{\mathcal M}}^{\mathrlap{\mathcal A}} 
            p \mapsto \!\! \Biggl\llbracket_{\mathrlap{\mathbb R^n}}^{\mathrlap{\mathbb C}}
                 r \! \mapsto \!\! \int_{\mathrlap{\mathbb R^n}} \mathrm d^n x \,  \frac{
                    \chi(x) d_{\pm}(\hat \theta_{xr}) p(\hat \theta_{xr}, x)}{
                    \left( \frac 1 2 \int_{\mathbb R} \mathrm d s |s|^{n-1} \chi(r + s \hat \theta_{xr})\right)
                }
            \Biggr\rrbracket\Biggr\rrbracket}_{\text{weighted backprojection}} m_\pm \, ,
    \end{equation}
    where $\hat k = k/|k|$ and $\hat \theta_{xr}$ is the unit vector parallel to $(r - x)$.
    (Remark: geometrically, the expression $m_{\pm}(\hat \theta_{xr}, x)$ is the measurement corresponding to the undirected line connecting $x$ and $r$.)
\end{theoremABC}

\paragraph{Note on the symmetry assumption of theorem~\ref{th:cone_gbpf}}
The theorem has been deliberately weakened by assuming the symmetry $m(\hat \theta, r) = m(- \hat \theta, r)$. (Without loss of generality, we can then also assume that $d(\hat \theta) = d(- \hat \theta)$.) The symmetry holds if $m(\hat \theta, r)$ really does represent a line integral of an underlying function. With experimental measurement noise, the symmetry may hold only approximately. The reason this assumption has been made is that it has allowed us to incorporate a discretisation regularisation that is necessary for the inversion formula \eqref{eq:th:cone_gbpf_inv_formula} to be effectively applied as a computational formula. Further detail on this matter is given in the proof of theorem~\ref{th:cone_gbpf}, which is found in \S\ref{sec:proofs}. 

\paragraph{Note on the amenability assumption of theorem~\ref{th:cone_gbpf}}
Define the `relevant lines' as those lines intersecting the reconstruction support of the object and pointing in any direction $\hat \theta$ where $d_\pm(\hat \theta) \neq 0$. The amenability condition \eqref{eq:th:cone_gbpf_amenability} essentially states that relevant lines must each intersect the support of the source-point distribution $\chi$ at more than $1$ point in $\mathbb R^n$.

\section{Example applications of the results in \texorpdfstring{$3$}{3} dimensions}\label{sec:examples}
We state now some examples of how to apply theorems \ref{th:parallel_gbpf} and \ref{th:cone_gbpf} to construct explicit inversion formulae for various X-ray transmission experiments in 3-dimensional space. The spherical polar and cylindrical coordinate systems are illustrated in \ref{fig:ex_coords_spher} and \ref{fig:ex_coords_cylind} respectively.

\begin{figure}[!ht]
\centering%
\begin{subfigure}{0.5\linewidth}
\centering
\tdplotsetmaincoords{60}{120}
\begin{tikzpicture}[scale=4.0, opacity=1.0, line join=bevel, tdplot_main_coords, fill opacity=1.0, 
    vector/.style={-stealth,black,very thick},
	vector guide/.style={dashed,black,thick},
    angle/.style={black}]
    \def\rnum{0.75}
    \def\angrnum{0.55}
    \def\thetanum{65}
    \def\phinum{75}
	\pgfsetlinewidth{.2pt}
    
	\draw[color=black,thick,->] (0,0,0) -- (1,0,0) node[anchor=north east]{$x$};
	\draw[color=black,thick,->] (0,0,0) -- (0,1,0) node[anchor=north west]{$y$};
	\draw[color=black,thick,->] (0,0,0) -- (0,0,0.7) node[anchor=south]{$z$};
    
	\tdplotsetcoord{P}{\rnum}{\thetanum}{\phinum}
	\tdplotdrawarc[angle]{(0,0,0)}{\angrnum}{0}{\phinum}{anchor=north}{$\phi$}
	\tdplotsetthetaplanecoords{\phinum}
	\tdplotdrawarc[tdplot_rotated_coords,angle,opacity=1]{(0,0,0)}{\angrnum}{0}{\thetanum}{anchor=south west}{$\theta$}
	\draw[vector,opacity=1] (0,0,0) -- (P);
	\draw[vector guide] (0,0,0) -- (Pxy);
	\draw[vector guide] (Pxy) -- (P);
    \node[anchor=south east] at ($(0,0,0)!0.5!(P)$) {$r$};
    \node[anchor=south] at (P) {$(r,\theta,\phi)$};
\end{tikzpicture}
\caption{Spherical polar coordinate system, $(r, \theta, \phi)$.}
\label{fig:ex_coords_spher}
\end{subfigure}%
\hfill%
\begin{subfigure}{0.5\linewidth}
\centering
\tdplotsetmaincoords{60}{150}
\begin{tikzpicture}[scale=4.0, opacity=1.0, line join=bevel, tdplot_main_coords, fill opacity=1.0, 
    vector/.style={-stealth,black,very thick},
	vector guide/.style={dashed,black,thick},
    angle/.style={black}]
    \def\rnum{1.05}
    \def\angrnum{0.75}
    \def\thetanum{65}
    \def\phinum{35}
	\pgfsetlinewidth{.2pt}
    
	\draw[color=black,thick,->] (0,0,0) -- (1,0,0) node[anchor=north east]{$x$};
	\draw[color=black,thick,->] (0,0,0) -- (0,1,0) node[anchor=north west]{$y$};
	\draw[color=black,thick,->] (0,0,0) -- (0,0,0.7) node[anchor=south]{$z$};
    
	\tdplotsetcoord{P}{\rnum}{\thetanum}{\phinum}
	\tdplotdrawarc[angle]{(0,0,0)}{\angrnum}{0}{\phinum}{anchor=north east}{$\phi$}
	\draw[vector,opacity=1] (0,0,0) -- (P);
	\draw[color=black,-] (0,0,0) -- (Pxy);
    \node[anchor=north west] at ($(0,0,0)!0.5!(Pxy)$) {$\rho$};
    \draw[vector guide] (Pxy) -- (P);
    \node[anchor=south] at (P) {$(\rho,\phi,z)$};
\end{tikzpicture}
\caption{Cylindrical coordinate system, $(\rho, \phi, z)$.}
\label{fig:ex_coords_cylind}
\end{subfigure}
\caption{Spherical polar and cylindrical coordinate systems.}
\end{figure}

\subsection{Theorem~\ref{th:parallel_gbpf} / parallel-beam examples in \texorpdfstring{$3$}{3} dimensions}\label{sec:examples_parallel}
To assist in understanding how the first theorem (theorem~\ref{th:parallel_gbpf}) is applied, we describe two examples of constructing inversion formulae for parallel-beam experiments. Theorem~\ref{th:parallel_gbpf} describes two reconstruction formulae from parallel-beam data: the backprojection-convolution formula \eqref{eq:th:gbpf_inv_formula}, and the convolution-backprojection formula \eqref{eq:th:fbp_inv_formula}. The theorem requires us to choose a distribution $d$ on $S^2$, which determines both the data required by the reconstruction formulae and the error metric that will be minimised by the reconstruction. The reconstruction formulae in theorem~\ref{th:parallel_gbpf} use the data from all those directions $\hat \theta \in S^2$ for which $d(\hat \theta) \neq 0$. In this section we choose the distribution $d$---for simplicity and so that our results reduce to known methods in the literature---to be the same distribution that describes the directions in $S^2$ from which projections are taken. The density $d(\hat \theta)$ will correspond to how many projections are taken per unit solid angle in $S^2$, in the direction of $\hat \theta \in S^2$. Choosing $d$ to match the distribution of projections is a sensible choice because it apportions an equal weight in the error function per projection. 
\begin{figure}[!ht]
\centering%
\begin{subfigure}{0.5\linewidth}
\centering
\tdplotsetmaincoords{70}{135}
\begin{tikzpicture}[scale=3.5, opacity=0.0, line join=bevel, tdplot_main_coords, fill opacity=.5]
	\pgfsetlinewidth{.2pt}
    \tdplotsetpolarplotrange{90-5}{90+5}{0}{360}
	\tdplotsphericalsurfaceplot{73}{36}{0.5}{black}{black}%
		{\draw[color=black,thick,->] (0,0,0) -- (1,0,0) node[anchor=north east]{$x$};}%
		{\draw[color=black,thick,->] (0,0,0) -- (0,1,0) node[anchor=north west]{$y$};}%
		{\draw[color=black,thick,->] (0,0,0) -- (0,0,0.7) node[anchor=south]{$z$};}%
\end{tikzpicture}
\caption{Circular scan.}
\label{fig:ex_parallel_a}
\end{subfigure}%
\hfill%
\begin{subfigure}{0.5\linewidth}
\centering
\tdplotsetmaincoords{70}{135}
\begin{tikzpicture}[scale=3.5, opacity=0.0, line join=bevel, tdplot_main_coords, fill opacity=.5, 
    vector/.style={-stealth,black,very thick},
	vector guide/.style={dashed,black,thick},
    angle/.style={dashed, black, thick}]
    \def\R{0.5}
    \def\angR{0.85}
    \def\psinum{25}
    \def\phinum{0}
	\pgfsetlinewidth{.2pt}

    
    \tdplotsetpolarplotrange{90-\psinum}{90+\psinum}{0}{360}
	\tdplotsphericalsurfaceplot{72}{72}{\R}{black}{black}%
		{\draw[color=black,thick,->] (0,0,0) -- (1,0,0) node[anchor=north east]{$x$};}%
		{\draw[color=black,thick,->] (0,0,0) -- (0,1,0) node[anchor=north west]{$y$};}%
		{\draw[color=black,thick,->] (0,0,0) -- (0,0,0.7) node[anchor=south]{$z$};}%

	\tdplotsetcoord{P}{\angR}{90-\psinum}{\phinum}
	\tdplotsetthetaplanecoords{\phinum}
	\tdplotdrawarc[tdplot_rotated_coords,angle,opacity=1]{(0,0,0)}{\angR}{90}{90-\psinum}{anchor=south east}{$\psi$}
	\draw[vector,opacity=1] (0,0,0) -- (P);
\end{tikzpicture}
\caption{Spheroannular scan with $\psi = 25 \, \mathrm{deg}$.}
\label{fig:ex_parallel_b}
\end{subfigure}
\caption{Illustrations of the two example trajectories covered in the parallel-beam examples section \S\ref{sec:examples_parallel}. Each diagram illustrates a distribution on $S^2$ that specifies the density of projections taken at each direction in $S^2$. \ref{fig:ex_parallel_a} illustrates the distribution $d(\theta, \phi)$ in \eqref{eq:beam_dist_circ}. \ref{fig:ex_parallel_b} illustrates the distribution $d(\theta, \phi)$ in \eqref{eq:beam_dist_spheroannular}. The former is a limiting case of the latter, with $\psi \rightarrow 0$ as the total integrated density (i.e. number of projections) is held constant.}
\end{figure}

\subsubsection{Parallel-beam example 1: circular scan}\label{sec:ex_parallel_a}
A simple example application of theorem~\ref{th:parallel_gbpf} is to a circular scan in $3$ dimensions, with $c$ evenly-spaced parallel-beam projections taken. This is the most standard and basic of X-ray scans. The resulting reconstruction formula is not new; this example is included for pedagogical purposes. Since it will be familiar to many readers, this example serves as a demonstration that the theorem correctly reduces to that well-known result for the `filtered backprojection' (i.e. convolution-backprojection) formula for reconstruction. 

Let $(\theta, \phi) \in S^2$ be spherical polar coordinates, with $\theta$ the polar coordinate and $\phi$ the azimuthal. The coordinate system is illustrated in fig.~\ref{fig:ex_coords_spher}, with $r=1$. The distribution of beam directions $d(\hat \theta)$ is
\begin{equation}
    d(\theta, \phi) = \delta\left(\theta - \frac{\pi}{2}\right) \sum_{m=1}^c \delta\left( \phi - \frac{m}{c} 2\pi \right) \, .
\end{equation}
It is easy to see that this data is insufficient to exactly reconstruct a continuous volume, because it is easy to choose a great-circle (or `equatorial band') of $S^2$ which does not pass through the support of $d(\theta, \phi)$. Stated more precisely: the Funk transform $\mathcal I[d](\hat \theta)$, which integrates the distribution $d$ around the great-circle orthogonal to $\hat \theta \in S^2$, is equal to $0$ for one or more choices of $\hat \theta \in S^2$. This violates the necessary and sufficient condition for data sufficiency expressed in eq.~\eqref{eq:th:gbpf_existence} of theorem~\ref{th:parallel_gbpf}, which requires that $\mathcal I[d](\hat \theta) \neq 0$. As is standard practice in the construction of analytic inversion methods, we make the approximation that the source points comprise a continuous distribution instead of a discrete one:
\begin{equation}\label{eq:beam_dist_circ}
    d(\theta, \phi) = \delta\left(\theta - \frac{\pi}{2}\right) \frac{c}{2 \pi} \, .
\end{equation}
This distribution is illustrated in fig.~\ref{fig:ex_parallel_a}. This distribution satisfies the data sufficiency requirement that $\mathcal I[d](\hat \theta) \neq 0$.

The Funk transform of an arbitrary nonzero vector $k = (k_x, k_y, k_z)$ can be expressed in closed form as:\footnote{
Whether the Funk transform has closed form is an essential aspect in the computational application of the convolution/backprojection-style methods. If it did not have closed form, then computing the deconvolution would be unwieldy or impractical because it would require a separate numerical integration for each $\hat k$.
} 
\begin{equation}
    \mathcal I[d](\hat k) 
    = \oint_{\mathrlap{\{\hat \theta \in S^{n-1} : \hat \theta \cdot \hat k = 0\}}} \; \mathrm d^{n-2} \hat \theta \;\;\;\, d(\hat \theta) 
    = \frac{c}{\pi} \frac{|k|}{\sqrt{k_x^2 + k_y^2}} \, .
\end{equation}
Per theorem~\ref{th:parallel_gbpf}, the formula for the deconvolution step simplifies viz.
\begin{equation}\label{eq:ex_parallel_01}
    F_{}^{-1} \mathrm{diag}_k \left( 
        \frac{|k|}{2 \pi \mathcal I[d](\hat k)}
    \right) F_{} =
    F_{}^{-1} \mathrm{diag}_k \left( 
        \frac{\sqrt{k_x^2 + k_y^2}}{2c}
    \right) F_{} \, ,
\end{equation}
where $F = F_{\mathcal A}$ in the case of the backprojection-convolution formula \eqref{eq:th:gbpf_inv_formula}, or $F = F_{\mathcal M}$ in the case of the convolution-backprojection formula \eqref{eq:th:fbp_inv_formula}.

Per theorem~\ref{th:parallel_gbpf}, the formula for the backprojection step simplifies viz.
\begin{equation}
\Bigg\llbracket_{\mathcal M}^{\mathcal A} p \mapsto
\Bigl\llbracket_{\mathbb R^n}^{\mathbb C}
                r \mapsto \oint_{\mathrlap{S^{2}}} \; \mathrm d^{2} \hat \theta \; d(\hat \theta) p(\hat \theta, r) \Bigr\rrbracket \Bigg\rrbracket
= 
\Bigg\llbracket_{\mathcal M}^{\mathcal A} p \mapsto
\Bigl\llbracket_{\mathbb R^n}^{\mathbb C}
                r \mapsto \frac{c}{2 \pi} \int_{\mathrlap{0}}^{\mathrlap{2 \pi}} \mathrm d \Phi \; p\left((\tfrac{\pi}{2}, \Phi), r\right) \Bigr\rrbracket \Bigg\rrbracket \, .
\end{equation}
Since we approximated a discrete angular distribution by a continuous one, it is appropriate to next write the integration over $\Phi$ as a summation over the actual angles $\Phi_i$ instead of the idealised continuous distribution, so that the equation may actually be used with discrete data. This is done by replacing the integration with a summation viz.
\begin{equation}
    \frac{c}{2\pi} \int_{\mathrlap{0}}^{\mathrlap{2\pi}} \mathrm d \Phi \rightarrow \sum_{\Phi_m} \, .
\end{equation}
The simplified formula for the backprojection step is thus
\begin{equation}\label{eq:ex_parallel_02}
\Bigg\llbracket_{\mathcal M}^{\mathcal A} p \mapsto
\Bigl\llbracket_{\mathbb R^n}^{\mathbb C}
                r \mapsto \sum_{i=1}^c p\left((\tfrac{\pi}{2}, \phi_i), r\right) \Bigr\rrbracket \Bigg\rrbracket \, .
\end{equation}

The backprojection-convolution reconstruction formula \eqref{eq:th:gbpf_inv_formula} is obtained by performing the backprojection step \eqref{eq:ex_parallel_02} prior to the convolution step \eqref{eq:ex_parallel_01}. This gives the reconstruction formula:
\begin{equation}
    \mu = 
    \underbrace{
    F_{\mathcal A}^{-1} \mathrm{diag}_k\left(\frac{\sqrt{k_x^2 + k_y^2}}{2c } \right)  F_{\mathcal A}
    }_{\text{volume deconvolution}} \;
        \underbrace{\Bigl\llbracket_{\mathbb R^n}^{\mathbb C}
                r \mapsto \sum_{i=1}^c m\left((\tfrac{\pi}{2}, \phi_i), r\right) \Bigr\rrbracket}_{\text{weighted backprojection of $m$}} \, .
\end{equation}
This is a way of phrasing a standard result for `backprojection-filtration' reconstruction from circular parallel-beam scanning data.

The convolution-backprojection reconstruction formula \eqref{eq:th:fbp_inv_formula} is obtained by performing the convolution step \eqref{eq:ex_parallel_01} prior to the backprojection step \eqref{eq:ex_parallel_02}. This gives the reconstruction formula:
\begin{equation}
    \mu = 
    \underbrace{
    \Biggl\llbracket_{\mathcal M}^{\mathcal A} p \mapsto 
    \Bigl\llbracket_{\mathbb R^n}^{\mathbb C}
                r \mapsto \sum_{i=1}^c p\left((\tfrac{\pi}{2}, \phi_i), r\right) \Bigr\rrbracket \Biggr\rrbracket
        }_{\text{weighted backprojection}} \underbrace{
    F_{\mathcal M}^{-1} \mathrm{diag}_k\left({\tfrac 1 {2c} \sqrt{k_x^2 + k_y^2}} \right)  F_{\mathcal M}
    }_{\text{projection deconvolution}} m \, .
\end{equation}
This is a reproduction of the well-known `filtered backprojection' method for a circular scan with a parallel-beam.

In practice, the convolution-backprojection method is practical to perform for parallel beam data, since the convolution $F_{\mathcal M}$ is a 2D convolution on each separate measurement set $m(\hat \theta, \cdot)$, and this corresponds with the data format that is actually collected by 2D X-ray detectors in a parallel-beam experiment. The same cannot be said for cone-beam reconstruction.

\subsubsection{Parallel-beam example 2: spheroannular scan}\label{sec:spheroannular}
This example is a generalisation of the circular scan.
Spherical polar coordinates are illustrated in fig.~\ref{fig:ex_coords_spher}.

Let $(\theta, \phi) \in S^2$ be spherical polar coordinates, with $\theta$ the polar coordinate and $\phi$ the azimuthal. The distribution of beam directions $d(\hat \theta)$ we assume is a uniform distribution around the equatorial band $|\theta - \pi/2| < \psi$, with $\psi \leq \pi/2$. The total area of that band is $\int_{\pi/2-\psi}^{\pi/2+\psi} \sin \theta = 4 \pi \sin \psi$. 
In experimental practice, this may be approximated with a discrete distribution. 
However, to facilitate the application of the continuous theory, we assume the continuous density:
\begin{equation}\label{eq:beam_dist_spheroannular}
    d(\theta, \phi) = \begin{cases}
        c / (4 \pi \sin \psi) &\text{if} \quad |\theta - \pi/2| < \psi\\
        0 &\text{otherwise}
    \end{cases} \, .
\end{equation}
This distribution is illustrated in fig.~\ref{fig:ex_parallel_b}. 

The Funk transform of an arbitrary nonzero vector $k = (k_r, k_\theta, k_\phi)$ can be expressed in closed form as:
\begin{equation}
    \mathcal I[d](\hat k) = \oint_{\mathrlap{\{\hat \theta \in S^{2} : \hat \theta \cdot \hat k = 0\}}} \; \mathrm d^{2} \hat \theta \;\;\;\, d(\hat \theta) 
    = \frac{c}{\pi \sin \psi} 
    \arcsin \left( \frac{\sin \psi}{\max\left\lbrace
            \sin \psi , \sin k_\theta
        \right\rbrace} \right) 
    \, .
\end{equation}

Per theorem~\ref{th:parallel_gbpf}, the formula for the deconvolution step simplifies viz.
\begin{equation}\label{eq:ex_parallel_spherannular_01}
    F_{}^{-1} \mathrm{diag}_k \left( 
        \frac{|k|}{2 \pi \mathcal I[d](\hat k)}
    \right) F_{} =
    F_{}^{-1} \mathrm{diag}_k \left( 
        \frac{\sin \psi}{2c}
        \frac{|k|}{
            \arcsin \left( \frac{\sin \psi}{\max\left\lbrace
            \sin \psi , \sin k_\theta
        \right\rbrace} \right)
        }
    \right) F_{} \, ,
\end{equation}
where $F = F_{\mathcal A}$ in the case of the backprojection-convolution formula \eqref{eq:th:gbpf_inv_formula}, or $F = F_{\mathcal M}$ in the case of the convolution-backprojection formula \eqref{eq:th:fbp_inv_formula}.

Per theorem~\ref{th:parallel_gbpf}, the formula for the backprojection step is 
\begin{equation}\label{eq:ex_parallel_spherannular_02}
\Bigg\llbracket_{\mathcal M}^{\mathcal A} p \mapsto
\Bigl\llbracket_{\mathbb R^n}^{\mathbb C}
                r \mapsto \oint_{\mathrlap{S^{2}}} \; \mathrm d^{2} \hat \theta \; d(\hat \theta) p(\hat \theta, r) \Bigr\rrbracket \Bigg\rrbracket
\, .
\end{equation}

The backprojection-convolution reconstruction formula \eqref{eq:th:gbpf_inv_formula} is obtained by performing the backprojection step \eqref{eq:ex_parallel_spherannular_02} prior to the convolution step \eqref{eq:ex_parallel_spherannular_01}. This gives the reconstruction formula:
\begin{equation}
    \mu = 
    \underbrace{
    F_{\mathcal A}^{-1} 
    \mathrm{diag}_k\left(
        \frac{\sin \psi}{2c}
        \frac{|k|}{
            \arcsin \left( \frac{\sin \psi}{\max\left\lbrace
            \sin \psi , \sin k_\theta
        \right\rbrace} \right)
        }
    \right)  F_{\mathcal A}
    }_{\text{volume deconvolution}} \;
        \underbrace{
        \Bigl\llbracket_{\mathbb R^n}^{\mathbb C}
                r \mapsto \oint_{\mathrlap{S^{2}}} \; \mathrm d^{2} \hat \theta \; d(\hat \theta) m(\hat \theta, r) \Bigr\rrbracket
        }_{\text{weighted backprojection of $m$}} \, .
\end{equation}

The convolution-backprojection reconstruction formula \eqref{eq:th:fbp_inv_formula} is obtained by performing the convolution step \eqref{eq:ex_parallel_spherannular_01} prior to the backprojection step \eqref{eq:ex_parallel_spherannular_02}. This gives the reconstruction formula:
\begin{equation}
    \mu = 
    \underbrace{
    \Biggl\llbracket_{\mathrlap{\mathcal M}}^{\mathrlap{\mathcal A}} p \mapsto 
        \Bigl\llbracket_{\mathrlap{\mathbb R^n}}^{\mathrlap{\mathbb C}}
                r \mapsto \oint_{\mathrlap{S^{2}}} \; \mathrm d^{2} \hat \theta \; d(\hat \theta) p(\hat \theta, r) \Bigr\rrbracket
        }_{\text{weighted backprojection}} \underbrace{
    \!F_{\mathcal M}^{-1} \mathrm{diag}_k\left(\!\!
        \frac{\sin \psi}{2c}
        \frac{|k|}{
            \arcsin \left( \frac{\sin \psi}{\max\left\lbrace
            \sin \psi , \sin k_\theta
        \right\rbrace} \right)
        }
    \!\!\right) F_{\mathcal M}
    }_{\text{projection deconvolution}} m \, .
\end{equation}

These results are equivalent to the methods published by \emph{Colsher} in 1980 \cite{colsher1980fully} (see also \cite{defrise2005image} for a review of `filtered backprojection' methods in Positron Emission Tomography).

\subsection{Theorem~\ref{th:cone_gbpf} / cone-beam examples in \texorpdfstring{$3$}{3} dimensions}
To assist in understanding how theorem~\ref{th:cone_gbpf} is applied, we describe two examples of constructing backprojection-convolution inversion formulae for cone-beam experiments. 
These formulae provide the first (to our knowledge) analytic inversion formulae for cone-beam X-ray experiments where the locus of source points, $X \subset \mathbb R^3$, is multidimensional. Such experiments are of active interest \cite{myers2016rapidly, kingston2018space, bauer2021, grewar2024preprint} and have not previously been approachable with direct methods of reconstruction. 

The reconstruction formula is given in \eqref{eq:th:cone_gbpf_inv_formula}. The imaging geometry is specified by the choice of distribution $d$ on $S^2$ (this limits the directions of measurement lines required by the reconstruction) and by the source-point density function $\chi$ (this specifies the location and density of X-ray source points in 3D space). To apply the formula efficiently, two integrals should be evaluated in closed form, if possible: (1) the Funk transform $\mathcal I[d](\hat k)$, and (2) the line integrals of source-point density:
\begin{equation}\label{eq:ex_cone_lineintegral}
    \frac{1}{2} \int_{-\infty}^{\infty} \mathrm d s \,  s^2 \chi(r + s \hat \theta_{xr}) \, ,
\end{equation}
for each $r \in \mathbb R^3$ and $x \in X \subset \mathbb R^3$, and where $\hat \theta_{xr}$ is the unit-length vector parallel to $(r-x)$. 

\begin{figure}[!ht]
\centering%
\begin{subfigure}{0.5\linewidth}
\centering
\tdplotsetmaincoords{70}{135}
\begin{tikzpicture}[scale=3.5, opacity=0.1, line join=bevel, tdplot_main_coords, fill opacity=.125, 
    vector/.style={-stealth,black,very thick},
	vector guide/.style={dashed,black,thick},
    angle/.style={dashed, black, thick}]
    \def\R{0.5}
    \def\angR{0.85}
    \def\detPhi{170}
    \def\detPhiSpan{70}
    \def\detThetaSpan{70}
    \def\phinum{0}
	\pgfsetlinewidth{.2pt}

    
    \tdplotsetpolarplotrange{0}{180}{0}{360}
	\tdplotsphericalsurfaceplot{36}{36}{\R}{black}{black}%
		{\draw[color=black,thick,->] (0,0,0) -- (0.9,0,0) node[anchor=north east]{};}%
		{\draw[color=black,thick,->] (0,0,0) -- (0,0.9,0) node[anchor=north west]{};}%
		{\draw[color=black,thick,->] (0,0,0) -- (0,0,0.7) node[anchor=south]{};}%
    
	\tdplotsetcoord{S}{\R}{90}{\detPhi}
    \tdplotsetcoord{D1}{\R*1.2}{90+\detThetaSpan/2}{\detPhi-180+\detPhiSpan/2}
    \tdplotsetcoord{D2}{\R*1.2}{90+\detThetaSpan/2}{\detPhi-180-\detPhiSpan/2}
    \tdplotsetcoord{D3}{\R*1.2}{90-\detThetaSpan/2}{\detPhi-180-\detPhiSpan/2}
    \tdplotsetcoord{D4}{\R*1.2}{90-\detThetaSpan/2}{\detPhi-180+\detPhiSpan/2}
    \coordinate[black, fill opacity=1, circle, fill, inner sep={1.5pt}, pin={[pin edge={black, draw opacity=1, -}, opacity=1] 0:{$x$}}] (Scopy) at (S);
    \draw[color=black,opacity=1] (S) -- (D1);
    \draw[color=black,opacity=1] (S) -- (D2);
    \draw[color=black,opacity=1] (S) -- (D3);
    \draw[color=black,opacity=1] (S) -- (D4);
    \draw[color=black,opacity=0, fill opacity=0.1, fill=black] (S) -- (D3) -- (D4);
    \draw[color=black,opacity=0, fill opacity=0.1, fill=black] (S) -- (D2) -- (D3);
    \draw[color=black,opacity=0, fill opacity=0.1, fill=black] (S) -- (D1) -- (D2);
    \draw[color=black,opacity=0, fill opacity=0.1, fill=black] (S) -- (D4) -- (D1);
    \draw[color=black,opacity=1,fill=gray] (D1) -- (D2) -- (D3) -- (D4) -- (D1);
    \draw[transform canvas={xshift=-0.5mm, yshift=-0.75mm}, color=black,opacity=1,<->] (D2) -- (D1) node[midway, anchor=north east]{$W$};
    \draw[transform canvas={xshift=-1mm, yshift=0mm}, color=black,opacity=1,<->] (D2) -- (D3) node[midway, anchor=east]{$W$};
    
\end{tikzpicture}
\caption{}
\label{fig:ex_cone_spher_a}
\end{subfigure}%
\hfill%
\begin{subfigure}{0.5\linewidth}
\centering
\begin{tikzpicture}[
    my angle/.style={draw, <->, angle eccentricity=1.3, angle radius=9mm},
    extended line/.style={shorten >=-#1,shorten <=-#1},
    extended line/.default=0.5cm]
    ]
    \coordinate[] (l) at (-30:2cm);
    \draw ([shift=(l)] 0,0) arc (-30:210:2cm);
    \coordinate[circle, fill, inner sep={1.5pt}, pin=190:{$O$}] (O)  at (0,0);
    \coordinate[circle, fill, inner sep=1.5pt, pin= 0:{$x$}]  (S)  at (2,0);
    \coordinate[] (e) at (120:2);
    \coordinate[circle, fill, inner sep={1.5pt}, pin= 200:{$r$}] (p) at ($(S)!.8!(e)$);
    \coordinate[] (c) at ($(S)!.5!(e)$);
    \coordinate[gray, circle, fill, inner sep={1.5pt}] (D) at ($(S)!1.5!(e)$);
    \path let \p1 = (D) in coordinate (L) at (\x1,0);

    \draw[transform canvas={xshift=0mm, yshift=0mm}, <->] (S) -- (O) node[midway, below]{$R$};
    \draw[transform canvas={xshift=0mm, yshift=-6mm}, <->] let  
      \p1 = (S),
      \p2 = ($(L)$)
    in
      (\x1,\y1) -- (\x2,\y2)
      node[midway,below]{$L$};
    \draw[<->] (O) -- (p) node[midway, left]{$|r|$};

    \draw pic["$\varphi$",draw=black,<->,angle eccentricity=1.2,angle radius=1cm] {angle=p--S--O};

    \draw[thick, gray] let 
      \p1 = (D),
      \p2 = (l)
    in 
      (\x1,\y2) node[below, gray] {} -- (\x1,3);
    \draw[gray] let \p1=(D) in (\x1,0.3) node[above, rotate=90] {detector};

    \draw[thick, dashed] (S) -- (D);
\end{tikzpicture}
\vspace*{-2.5mm}
\caption{}
\label{fig:ex_cone_spher_b}
\end{subfigure}
\caption{The spherical 3D cone-beam imaging geometry in \S\ref{sec:ex_cone_spher}. \textbf{(a):} The source locus $X$ is a sphere of radius $R$, centred at the coordinate origin. \textbf{(b):} a 2D cross-section of the plane containing the origin $O$, the source point $x \in X$, and the volume point $r \in \mathbb R^3$.}
\label{fig:ex_cone_spher}
\end{figure}
\begin{figure}[!ht]
\centering%
\begin{subfigure}{0.3\linewidth}
\centering
\resizebox{1.0\columnwidth}{!}{%
\tdplotsetmaincoords{70}{135}
\begin{tikzpicture}[scale=2.5, opacity=0.1, line join=bevel, tdplot_main_coords, fill opacity=.125, 
    vector/.style={-stealth,black,very thick},
	vector guide/.style={dashed,black,thick},
    angle/.style={dashed, black, thick}]
    \def\R{0.5}
    \def\angR{0.85}
    \def\detPhi{170}
    \def\detPhiSpan{90}
    \def\detThetaSpan{50}
    \def\phinum{0}
	\pgfsetlinewidth{.2pt}

    
    \tdplotsetpolarplotrange{32.5}{147.5}{0}{360}
	\tdplotsphericalsurfaceplot{36}{36}{\R/sin(\tdplottheta)}{black}{black}%
		{\draw[color=black,thick,->] (0,0,0) -- (0.8,0,0) node[anchor=north east]{};}%
		{\draw[color=black,thick,->] (0,0,0) -- (0,0.8,0) node[anchor=north west]{};}%
		{\draw[color=black,thick,->] (0,0,0) -- (0,0,0.8) node[anchor=south]{$z$};}%

	\tdplotsetcoord{S}{\R/sin(90)}{90}{\detPhi}
    \tdplotsetcoord{D1}{\R*1.3}{90+\detThetaSpan/2}{\detPhi-180+\detPhiSpan/2}
    \tdplotsetcoord{D2}{\R*1.3}{90+\detThetaSpan/2}{\detPhi-180-\detPhiSpan/2}
    \tdplotsetcoord{D3}{\R*1.3}{90-\detThetaSpan/2}{\detPhi-180-\detPhiSpan/2}
    \tdplotsetcoord{D4}{\R*1.3}{90-\detThetaSpan/2}{\detPhi-180+\detPhiSpan/2}
    \coordinate[black, fill opacity=1, circle, fill, inner sep={1.5pt}, pin={[pin edge={black, draw opacity=1, -}, opacity=1] 70:{$x$}}] (Scopy) at (S);
    \draw[color=black,opacity=1] (S) -- (D1);
    \draw[color=black,opacity=1] (S) -- (D2);
    \draw[color=black,opacity=1] (S) -- (D3);
    \draw[color=black,opacity=1] (S) -- (D4);
    \draw[color=black,opacity=0, fill opacity=0.1, fill=black] (S) -- (D3) -- (D4);
    \draw[color=black,opacity=0, fill opacity=0.1, fill=black] (S) -- (D2) -- (D3);
    \draw[color=black,opacity=0, fill opacity=0.1, fill=black] (S) -- (D1) -- (D2);
    \draw[color=black,opacity=0, fill opacity=0.1, fill=black] (S) -- (D4) -- (D1);
    \draw[color=black,opacity=1,fill=gray] (D1) -- (D2) -- (D3) -- (D4) -- (D1);
    \draw[transform canvas={xshift=-0.5mm, yshift=-0.75mm}, color=black,opacity=1,<->] (D2) -- (D1) node[midway, anchor=north east]{$W$};
    \draw[transform canvas={xshift=-1mm, yshift=0mm}, color=black,opacity=1,<->] (D2) -- (D3) node[midway, anchor=east]{$H$};
    
\end{tikzpicture}
}
\caption{source cylinder}
\label{fig:ex_cone_cyl_a}
\end{subfigure}%
\hfill%
\begin{subfigure}{0.38\linewidth}
\centering
\resizebox{1.0\columnwidth}{!}{%
\begin{tikzpicture}[
    my angle/.style={draw, <->, angle eccentricity=1.3, angle radius=9mm},
    extended line/.style={shorten >=-#1,shorten <=-#1},
    extended line/.default=0.5cm]
    ]
    \pgfmathsetmacro{\W}{4.5}
    
    \coordinate[] (l) at (-90:2cm);
    \draw ([shift=(l)] 0,0) arc (-90:270:2cm);
    \coordinate[circle, fill, inner sep={1.5pt}, pin=190:{$O$}] (O)  at (0,0);
    \coordinate[circle, fill, inner sep=1.5pt, pin= 115:{$x$}]  (S)  at (2,0);
    \coordinate[] (e) at (135:2);
    \coordinate[circle, fill, inner sep={1.5pt}, pin= 200:{$r$}] (p) at ($(S)!.8!(e)$);
    \coordinate[] (c) at ($(S)!.5!(e)$);
    \coordinate[gray, circle, fill, inner sep={1.5pt}] (D) at ($(S)!1.25!(e)$);
    \path let \p1 = (D) in coordinate (L) at (\x1,0);

    \draw[transform canvas={xshift=0mm, yshift=0mm}, <->] (S) -- (O) node[midway, below]{$R$};
    \draw[transform canvas={xshift=0mm, yshift=-6mm}, <->] let  
      \p1 = (S),
      \p2 = ($(L)$)
    in
      (\x1,\y1) -- (\x2,\y2)
      node[midway,below]{$L$};
    \draw[<->] (O) -- (p) node[midway, left]{$r_\rho$};

    \draw pic["$\theta_h$",draw=black,<->,angle eccentricity=1.2,angle radius=1.25cm] {angle=p--S--O};

    \draw[thick, gray] let 
      \p1 = (D),
      \p2 = (l)
    in 
      (\x1,\W/2) node[below, gray] {} -- (\x1,-\W/2);
    \draw[gray] let \p1=(D) in (\x1,0.3) node[above, rotate=90] {detector};
    \draw[<->] let 
      \p1 = (D),
      \p2 = (l)
    in 
      (\x1-0.5cm,\W/2) node[below, gray] {} -- (\x1-0.5cm,-\W/2)
      node[midway,left]{$W$};

    \draw[thick, dashed] (S) -- (D);
\end{tikzpicture}
\vspace*{-2.5mm}
}
\caption{top-down view showing $\theta_h$}
\label{fig:ex_cone_cyl_b}
\end{subfigure}
\hfill%
\begin{subfigure}{0.31\linewidth}
\centering
\resizebox{1.0\columnwidth}{!}{%
\begin{tikzpicture}[
    my angle/.style={draw, <->, angle eccentricity=1.3, angle radius=9mm},
    extended line/.style={shorten >=-#1,shorten <=-#1},
    extended line/.default=0.5cm]
    ]

    \pgfmathsetmacro{\twoR}{9}
    \pgfmathsetmacro{\propofr}{0.8}
    
    \pgfmathsetmacro{\lenxr}{\twoR*sqrt((1+cos(135))/2)}
    
    \coordinate[circle, fill, inner sep=1.5pt, pin= 115:{$x$}] (S) at (\lenxr/2,0);
    \coordinate[] (T) at (-\lenxr/2,0);
    \coordinate[] (U) at (\lenxr/2,1);
    
    \coordinate[circle, fill, inner sep=1.5pt, pin= 60:{$r$}] (r) at (\lenxr/2 - \lenxr*\propofr, \twoR/2*0.2);

    \draw[thick, black, -] let 
      \p1 = (S),
      \p2 = (T)
    in 
      (\x1,-1.5) -- (\x1,2)
      (\x2,-1.5) -- (\x2,2);

    \draw pic["$\theta_{xr}$",draw=black,<->,angle eccentricity=1.2,angle radius=1.25cm] {angle=U--S--r};
    
    %
    \draw[transform canvas={xshift=0mm, yshift=-2mm}, <->] (S) -- (T) node[midway, below]{$2 R \sqrt{\tfrac{1}{2} - \tfrac{1}{2} \cos(2 \theta_h)}$};
    \draw[black, ->, dashed] (S) -- (r);
    
\end{tikzpicture}
}
\caption{vertical slice showing $\theta_{xr}$}
\label{fig:ex_cone_cyl_c}
\end{subfigure}
\caption{The cylindrical 3D cone-beam imaging geometry in \S\ref{sec:ex_cone_cylinder}. \textbf{(a):} The source locus $X$ is a cylinder of radius $R$, of infinite height, with its axis intersecting the coordinate origin. \textbf{(b):} a top-down view of the cylinder. The angle $\theta_h \in [0, \pi]$ is measured between the top-down 2D projection of the coordinate origin $O$, the source point $x \in X$, and the volume point $r \in \mathbb R^3$. The width of the detector is $W$, and the height of the detector (not visible) is $H$. \textbf{(c):} a vertical slice through the cylinder, along the plane parallel to the $z$ axis and containing the line through $x$ and $r$. The polar angle $\theta_{xr} \in [0, \pi]$ is visible in this side view.}
\label{fig:ex_cone_cyl}
\end{figure}

\subsubsection{Cone-beam example 1: spherical scan with finite detector}\label{sec:ex_cone_spher}
Consider a scan with a spherical source locus of radius $R > 0$, as approximated by the discrete trajectory in \cite{bauer2021}. An example illustration of a discretely sampled sphere is in fig.~\ref{fig:trajectories_lds_sphere}. A flat square detector of height and width $W$ is oriented opposite the source point at a distance $L > R$, so that measurement lines passing through the sphere are captured. The imaging geometry is depicted in fig.~\ref{fig:ex_cone_spher}. The source locus $X$ is the sphere expressed by: 
\begin{equation}
    X = \{ R \hat \theta : \hat \theta \in S^{2} \} \subset \mathbb R^3 \, ,
\end{equation}
and is illustrated in fig.~\ref{fig:ex_cone_spher_a}.
We assume a uniform 2-dimensional density of $c/(4 \pi R^2)$ source points per unit area on this sphere (for a total of $c$ projections). This corresponds to a 3D source-point density function $\chi$ which is given in spherical polar coordinates by:
\begin{equation}
    \chi ( (r, \theta, \phi) ) = \frac{c}{4 \pi R^2} \delta(r - R) \, .
\end{equation}

We have stipulated the source-point distribution $\chi$. In order to apply theorem~\ref{th:cone_gbpf} for reconstruction, we next need to establish a reconstruction support $V \subset \mathbb R^3$ and a distribution $d$ on $S^2$ that together satisfy both the amenability condition \eqref{eq:th:cone_gbpf_amenability} and sufficiency condition \eqref{eq:th:cone_gbpf_existence}. It is obvious from the spherical symmetry of the experiment that the reconstruction support ought to be a sphere centred at the origin of coordinates, and that $d_\pm(\hat \theta)$ ought to be independent of $\hat \theta$ to maintain symmetry. An overall factor on $d_\pm(\hat \theta)$ makes no difference to the reconstruction formula, so we choose $d_\pm(\hat \theta) = 1$, which clearly satisfies the sufficiency condition ($\mathcal I[d](\hat \theta) \neq 0$) because $\mathcal I[d](\hat \theta) = 2 \pi$ for all $\hat \theta \in S^2$. The amenability condition requires that `relevant lines' (lines pointing in any direction $\hat \theta$ where $d(\hat \theta) \neq 0$, and which intersect the reconstruction support) intersect the source-point distribution $\chi$ in at least two locations, and that the measurement data should be available for these lines. Since $d(\hat \theta) = 1$, this requires that all lines intersecting $V$ must intersect the sphere of radius $R$ in at least two locations. That implies that the radius of the reconstruction support is less than $R$. Also, the measurement data along `relevant lines' should actually be available, so we must abide the sharper constraint imposed by the size of the rectangular detector. We thus choose $V$ as the sphere of radius $R/\sqrt{1 + 4L^2/W^2}$ and $d(\hat \theta) = 1$.

Now that $V$ and $d$ have been chosen to satisfy the amenability and sufficiency conditions, the next step is to compute the Funk transform analytically (which we have already done: $\mathcal I[d](\hat \theta) = 2 \pi$) and the integral \eqref{eq:ex_cone_lineintegral} in closed form. 
Denote the origin of coordinates by $O$. Now select a source point $x \in X$ and volume point $r \in \mathbb R^3$. Define the angle $\varphi = \angle O x r$, illustrated in fig.~\ref{fig:ex_cone_spher_b}.
Then (without proof):
\begin{equation}
    \frac 1 2 \int_{-\infty}^{\infty} \mathrm d s \, s^2 \chi(r+s\hat \theta_{xr}) = \frac{c}{4 \pi} \frac{\cos(2 \varphi) + (|r|/R)^2}{|\cos \varphi|} \, ,
\end{equation}
where $\hat \theta_{xr}$ is the unit vector in the direction of $(r - x)$. 

The backprojection-convolution formula \eqref{eq:th:cone_gbpf_inv_formula} may now be applied to recover the minimum-error solution to the attenuating volume described in \eqref{eq:th:cone_gbpf_err}.

\subsubsection{Cone-beam example 2: cylindrical scan with finite detector}\label{sec:ex_cone_cylinder}
Consider a scan with a cylindrical source locus of radius $R > 0$, as approximated by the discrete trajectory in \cite{kingston2018space}. An example illustration of a discretely sampled cylinder is in fig.~\ref{fig:trajectories_sft}. We assume a rectangular detector of width $W$ and height $H$, at a distance $L$ from the source point. The imaging geometry is depicted in fig.~\ref{fig:ex_cone_cyl}.
We assume that the cylinder has infinite height (in practice, measurements become zero everywhere when the X-ray source is placed far above or below the reconstruction support, so this assumption is practically valid).
We use cylindrical coordinates $(\rho, \phi, z)$ for $\mathbb R^3$, illustrated in fig.~\ref{fig:ex_coords_cylind}. The source locus is
\begin{equation}
    X = \left\lbrace (\rho,\phi,z) \in \mathbb R^3 : \rho = R \right\rbrace \, ,
\end{equation}
and is illustrated in fig.~\ref{fig:ex_cone_cyl_a}.
We assume a uniform 2-dimensional density of $C$ source points per unit area on this cylinder. So, the distribution $\chi$ of source points in $\mathbb R^3$ is given by
\begin{equation}
    \chi\left((\rho,\phi,z)\right) = C \, \delta\left( \rho - R \right) \, .
\end{equation}

We have stipulated the source-point distribution $\chi$. In order to apply theorem~\ref{th:cone_gbpf} for reconstruction, we next need to establish a reconstruction support $V \subset \mathbb R^3$ and a distribution $d_\pm$ on $S^2$ that together satisfy both the amenability condition \eqref{eq:th:cone_gbpf_amenability} and sufficiency condition \eqref{eq:th:cone_gbpf_existence}. It is obvious from the cylindrical symmetry of the experiment that the reconstruction support ought to be a cylinder, and that $d_\pm(\hat \theta) \equiv d_\pm(\theta, \phi)$ ought depend only on $\theta$. The amenability condition requires that `relevant lines' (lines pointing in any direction $\hat \theta$ where $d_\pm(\hat \theta) \neq 0$, and which intersect the reconstruction support) intersect the source-point distribution $\chi$ in at least two locations, and that the measurement data should be available for these lines. The maximum allowable radius of $V$ is determined by the width of the detector to be $R/\sqrt{1+4L^2/W^2}$. With that choice of $V$, the maximum allowable support of $d_\pm(\hat \theta)$ is determined as the range of angles $\hat \theta \in S^2$ over which measurement data is available for all lines intersecting $V$ and pointing in the direction $\hat \theta$. Setting $d_\pm(\hat \theta) = 1$ over that maximal support, and $0$ elsewhere, we find
\begin{equation}
    d_{\pm}((\theta, \phi)) = \begin{cases}
        1 & \text{if} \quad |\theta - \pi/2| < \Omega_v/2 \\
        0 & \text{otherwise} 
    \end{cases} \, ,
    \quad \Omega_v = 2 \arctan \left( \frac{H/2}{\sqrt{L^2 + (W/2)^2}} \right) \, .
\end{equation}

Now that $V$ and $d$ have been chosen to satisfy the amenability and sufficiency conditions, the next step is to compute the Funk transform analytically and the integral \eqref{eq:ex_cone_lineintegral} in closed form. The Funk transform is the same as given earlier in \S\ref{sec:spheroannular}, because the weighting of the cone-beam backprojection emulates the spheroannular parallel-beam backprojection with $\psi = \Omega_v/2$. The Funk transform is
\begin{equation}
    \mathcal I[d_{\pm}](\hat k) = \oint_{\mathrlap{\{\hat \theta \in S^{2} : \hat \theta \cdot \hat k = 0\}}} \; \mathrm d^{2} \hat \theta \;\;\;\, d_{\pm}(\hat \theta) 
    = 4 \arcsin \left(\frac{\sin (\Omega_v/2)}{\max\left\lbrace
            \sin (\Omega_v/2) , \sin k_\theta
        \right\rbrace} \right) \, .
\end{equation}
The integral \eqref{eq:ex_cone_lineintegral} has closed form as follows.
Select a source point $x$ and volume point $r$. Define the angle $\theta_h \in [0, \pi]$ as $\theta_h = \angle(r_\rho, r_\psi, 0)(x_\rho, x_\psi,0)(0,0,0)$, illustrated in fig.~\ref{fig:ex_cone_cyl_b}.
Define $\theta_{xr} \in [0, \pi]$ to be the polar angle of $\hat \theta_{xr}$, illustrated in fig.~\ref{fig:ex_cone_cyl_c}.
Then \eqref{eq:ex_cone_lineintegral} can be evaluated in closed form to:
\begin{equation}
    \frac 1 2 \int_{-\infty}^{\infty} \mathrm d s \, s^2 \chi((r_\rho, r_\psi, r_z)+s\hat \theta_{xr}) = C R^2 (\csc \theta_{xr})^3 \frac{\cos(2 \theta_h) + (r_\rho/R)^2}{|\cos \theta_h|} \, .
\end{equation}

The backprojection-convolution formula \eqref{eq:th:cone_gbpf_inv_formula} may now be applied to recover the minimum-error solution to the attenuating volume described in \eqref{eq:th:cone_gbpf_err}.

A performant implementation of this reconstruction method has been developed and is described in \cite{grewar2024preprint}.

\section{Proofs}\label{sec:proofs}

The complete proof of theorems \ref{th:parallel_gbpf} and \ref{th:cone_gbpf} is given in this section. First there are some extra definitions required, given in \S\ref{sec:definitions_extra}. Then, there is a significant undertaking of proof of intermediate lemmas and corollaries in \S\ref{sec:proofs_bc}, culminating in a result pertaining to the backprojection-convolution algorithm. Then, there are some more intermediate results given in \S\ref{sec:proofs_cb} relating to the convolution-backprojection algorithm. Finally, the two major theorems from \S\ref{sec:theorems_01}, which comprise the main results of this article, are proven in \S\ref{sec:proofs_of_main_theorems} by drawing together the preceding intermediate results.

\subsection{Extra definitions}\label{sec:definitions_extra}

\begin{definition}\label{def:sdp}
    We define the single-direction projection operator $A_{\hat \theta} : \mathcal A \rightarrow \mathcal M_{\hat \theta}$ (where $\hat \theta \in S^{n-1}$) by
    \begin{align}
        A_{\hat \theta} = \Biggl\llbracket_{\mathcal A}^{\mathcal M_{\hat \theta}}
            \mu &\mapsto \Bigl\llbracket_{\mathbb R^n}^{\mathbb C}
                r_\perp \mapsto 
                \int_{-\infty}^\infty \mathrm d s \, \mu(r_\perp + s \hat \theta) 
            \Bigr\rrbracket \Biggr\rrbracket \, .
    \end{align}
\end{definition}
\begin{definition}[$A^{\dag_f}$]
    Let $f$ be a nonnegative, real distribution on $M$.
    If $f$ is positive, then define $A^{\dag_f}$ as the adjoint of $A$ with respect to the following inner product on $\mathcal M$:
    \begin{equation}
        \left\langle a, b \right\rangle_f = \oint_{\mathrlap{S^{n-1}}} \; \mathrm d^{n-1} \hat \theta \int_{\hat \theta^\perp} \mathrm d^{n-1} r_\perp a(\hat \theta, r_\perp)^* f(\hat \theta, r_\perp) b(\hat \theta, r_\perp) \, .
    \end{equation}
    If $f$ is not positive (i.e. it is $0$ on a set of nonzero measure), then $A^{\dag_f}$ is defined by a limiting process using positive distributions that approach $f$.
\end{definition}
\begin{definition}[$A_{\hat \theta}^{\dag_h}$]
    Let $h$ be a nonnegative, real distribution on $M_{\hat \theta}$.
    If $h$ is positive, then define $A^{\dag_h}_{\hat \theta}$ as the adjoint of $A_{\hat \theta}$ with respect to the following inner product on $\mathcal M_{\hat \theta}$:
    \begin{equation}
        \left\langle a, b \right\rangle_h = \int_{\hat \theta^\perp} \mathrm d^{n-1} r_\perp a(r_\perp)^* h(r_\perp) b(r_\perp) \, .
    \end{equation}
    If $h$ is not positive (i.e. it is $0$ on a set of nonzero measure), then $A_{\hat \theta}^{\dag_h}$ is defined by a limiting process using positive distributions that approach $h$.
    
    Furthermore, if $h$ is a distribution on $M$, i.e. $h \equiv h(\hat \theta, r)$, then define $A^{\dag_h}_{\hat \theta}$ in the natural manner with the first argument $h$ filled in with $\hat \theta$, i.e. $A^{\dag_h}_{\hat \theta} \equiv A^{\dag_{h(\hat \theta, \cdot)}}_{\hat \theta}$.
\end{definition}

\begin{definition}
    We use the symbol $\delta^n_a$ to denote the $n-$dimensional Dirac-delta distribution centred at $a \in \mathbb R^n$.
\end{definition}

\subsection{Proofs related to backprojection-convolution}\label{sec:proofs_bc}
\begin{lemma}\label{lem:sdp_adjoint}
The operator $A_{\hat \theta}^{\dag_h} : \mathcal M_{\hat \theta} \rightarrow \mathcal A$ is given explicitly by
    \begin{align}
            A_{\hat \theta}^{\dag_h} = \Biggl\llbracket_{\mathcal M_{\hat \theta}}^{\mathcal A}
            m &\mapsto \Bigl\llbracket_{\mathbb R^n}^{\mathbb C}
                r \mapsto m(r) h(r)
            \Bigr\rrbracket
            \Biggr\rrbracket \, .
    \end{align}
\end{lemma}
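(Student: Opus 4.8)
The plan is to verify the defining adjoint relation directly, rather than to guess and check. I would show that for every $\mu \in \mathcal A$ and every $m \in \mathcal M_{\hat\theta}$ one has $\langle A_{\hat\theta}\mu, m\rangle_h = \langle \mu, (r \mapsto m(r)h(r))\rangle_{\mathcal A}$, where the right-hand pairing is the $L^2$ inner product on $\mathcal A$. Since the adjoint is uniquely determined by this relation with respect to the two fixed inner products, reading off the right slot of the $\mathcal A$-pairing immediately yields $A_{\hat\theta}^{\dag_h} : m \mapsto (r \mapsto m(r)h(r))$. Throughout, $h$ and $m$, which live on $M_{\hat\theta}$, must be read as functions on $\mathbb R^n$ that are constant along $\hat\theta$, i.e. $h(r) := h(r - \hat\theta(\hat\theta\cdot r))$ and likewise for $m$; this is exactly the meaning carried by the symbols $m(r)$ and $h(r)$ appearing in the statement.

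The single computational step is a change of variables. First I would expand the left-hand side using the weighted inner product and definition~\ref{def:sdp}, producing a nested integral over $r_\perp \in M_{\hat\theta}$ (outer) and $s \in \mathbb R$ (inner) of $\mu(r_\perp + s\hat\theta)^*\, h(r_\perp)\, m(r_\perp)$, with $h$ real so that no conjugation falls on it. Because $\hat\theta$ is a unit vector, the map $(r_\perp, s) \mapsto r_\perp + s\hat\theta$ is an orthogonal parameterisation of the splitting $\mathbb R^n = M_{\hat\theta} \oplus \mathbb R\hat\theta$ with unit Jacobian, so $\mathrm d^{n-1}r_\perp\,\mathrm d s = \mathrm d^n r$. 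Collapsing the two integrals into one integral over $r \in \mathbb R^n$ and using $h(r_\perp) = h(r)$, $m(r_\perp) = m(r)$ under the $\hat\theta$-invariant reading, the expression becomes $\int_{\mathbb R^n}\mathrm d^n r\, \mu(r)^*\, h(r)\, m(r)$, which is precisely $\langle \mu, (r\mapsto m(r)h(r))\rangle_{\mathcal A}$, as required.

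There is no serious analytic obstacle here; the only points requiring genuine care are (i) the bookkeeping that promotes $h$ and $m$ to $\hat\theta$-invariant functions on $\mathbb R^n$, together with the Fubini-type measure factorisation that merges the two integrals, and (ii) the case in which $h$ fails to be strictly positive. For the latter, $A_{\hat\theta}^{\dag_h}$ is defined by a limit over positive distributions $h_\varepsilon \to h$; since the candidate formula $m \mapsto m\,h$ is pointwise (hence linear) in $h$, the limit of $m\,h_\varepsilon$ recovers $m\,h$ and the closed form persists. I would close by invoking uniqueness of the adjoint to pass from equality of the bilinear pairings to equality of the operators themselves.
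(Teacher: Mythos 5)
Your proof is correct, but it follows a genuinely different route from the paper's. The paper evaluates the adjoint pointwise: it substitutes $\mu = \delta^n_r$ into the defining relation $\langle A_{\hat\theta}^{\dag_h} m, \mu\rangle = \langle m, A_{\hat\theta}\mu\rangle_h$, uses the fact that the line integral of an $n$-dimensional Dirac delta collapses to an $(n-1)$-dimensional delta on $M_{\hat\theta}$ centred at $r - \hat\theta(\hat\theta\cdot r)$, and reads off $(A_{\hat\theta}^{\dag_h}m)(r) = h(r)m(r)$ directly. You instead verify the pairing identity $\langle A_{\hat\theta}\mu, m\rangle_h = \langle \mu, mh\rangle_{\mathcal A}$ for arbitrary $\mu \in \mathcal A$ by exploiting the orthogonal splitting $\mathbb R^n = M_{\hat\theta}\oplus\mathbb R\hat\theta$ with unit Jacobian, merging $\mathrm d^{n-1}r_\perp\,\mathrm ds$ into $\mathrm d^n r$, and then invoke uniqueness of the adjoint. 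Your version buys some rigour: it stays inside the $L^2$ framework rather than pairing against Dirac deltas (which are not elements of $\mathcal A$), it makes explicit the $\hat\theta$-invariant reading of $m(r)$ and $h(r)$ that the paper leaves to the final line of its computation, and it is the only one of the two that addresses the limiting definition of $A_{\hat\theta}^{\dag_h}$ when $h$ vanishes on a set of positive measure. What the paper's delta-kernel method buys is economy and reusability: the same substitution template is recycled almost verbatim to compute the full backprojection $A^{\dag_f}$ in lemma~\ref{lem:full_back_proj}, whereas your Fubini argument would need the extra $\oint_{S^{n-1}}$ layer carried through. Both arguments hinge on the same geometric fact --- the unit-Jacobian factorisation of Lebesgue measure along and across $\hat\theta$ --- so the difference is one of formal machinery rather than substance.
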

\begin{proof}
    $A_{\hat \theta}^{\dag_h}$ is defined by the following equality of inner products, where $m \in \mathcal M_{\hat \theta}$ and $\mu \in \mathcal A$:
    \begin{equation}
        \left\langle A_{\hat \theta}^{\dag_h} m , \mu \right\rangle
        = \left\langle m , A_{\hat \theta}\mu \right\rangle_h \, .
    \end{equation}
    Substituting $\mu$ with $\delta^n_r$ and using the $L^2$ norm of $\mathcal A$, the left-hand side becomes $(A_{\hat \theta}^{\dag_h} m)(r)$. Then, solving for the right-hand side we find:
    \begin{subequations}
    \begin{align}
        (A_{\hat \theta}^{\dag_h} m)(r)
        &= \left\langle m , A_{\hat \theta} \delta^n_{r} \right\rangle_h \\
        &= \left\langle m , \Bigl\llbracket_{M_{\hat \theta}}^{\mathbb C} r_\perp \mapsto 
            \int_{-\infty}^{\infty} \mathrm d s \, \delta_r^n(r_\perp + s \hat \theta)
            \Bigr\rrbracket
        \right\rangle_h \\
        \intertext{The line integral through an $n-$dimensional Dirac-delta distribution becomes an $(n-1)-$dimensional Dirac-delta distribution in the $\mathbb R^{n-1}$ space orthogonal to the direction of the line. The projection of $r$ into the $\mathbb R^{n-1}$ subspace is $(r - \hat \theta(\hat \theta \cdot r))$.}
        &= \left\langle m , \Bigl\llbracket_{M_{\hat \theta}}^{\mathbb C} r_\perp \mapsto 
            \delta_0^{n-1}\left(r_\perp - (r - \hat \theta(\hat \theta \cdot r))\right)
            \Bigr\rrbracket
        \right\rangle_h \\
        &= \int_{\mathrlap{\mathbb R^{n-1}}} \mathrm d^{n-1} r_\perp \; m(r_\perp) h(r_\perp) \delta_0^{n-1}\left(r_\perp - (r - \hat \theta(\hat \theta \cdot r))\right) \\
        &= h(r - \hat \theta(\hat \theta \cdot r)) m(r - \hat \theta(\hat \theta \cdot r)) \\
        &= h(r) m(r) \, .
    \end{align}
    \end{subequations}
\end{proof}

\begin{corollary}\label{cor:ata_single}
    \begin{align}
        A_{\hat \theta}^{\dag_h} A_{\hat \theta} =
        \Biggl\llbracket_{\mathcal A}^{\mathcal A}
        \mu &\mapsto \Bigl\llbracket_{\mathbb R^n}^{\mathbb C} r \mapsto 
        h(r) \int_{-\infty}^\infty \mathrm d s \, \mu(r + s \hat \theta) 
        \Bigr\rrbracket \Biggr\rrbracket
    \end{align}
\end{corollary}
\begin{proof}
    The definition for $A_{\hat \theta}$ (definition~\ref{def:sdp}) is directly applied to the expression for $A_{\hat \theta}^{\dag_h}$ given in lemma~\ref{lem:sdp_adjoint}, which shows that
    \begin{equation}
        A_{\hat \theta}^{\dag_h} A_{\hat \theta} = \Biggl\llbracket_{\mathcal A}^{\mathcal A} \mu \mapsto \Bigl\llbracket_{\mathbb R^n}^{\mathbb C} r \mapsto 
        h(r) \int_{-\infty}^\infty \mathrm d s \, \mu(r - \hat \theta (\hat \theta \cdot r) + s \hat \theta) 
        \Bigr\rrbracket \Biggr\rrbracket \, .
    \end{equation}
    The integration variable $s$ is then substituted with $s' = s - \hat \theta \cdot r$.
\end{proof}

\begin{lemma}\label{lem:full_back_proj}
    \begin{align}
        A^{\dag_f} = \Biggl\llbracket_{\mathcal M}^{\mathcal A}
        m &\mapsto \Bigl\llbracket_{\mathbb R^n}^{\mathbb C}
                r \mapsto \oint_{\mathrlap{S^{n-1}}} \; \mathrm d^{n-1} \hat \theta f(\hat \theta, r) m(\hat \theta, r)
            \Bigr\rrbracket \Biggr\rrbracket \, .
    \end{align}
\end{lemma}
\begin{proof}
    The proof is nearly identical to that for lemma \ref{lem:sdp_adjoint}, except that the inner product $f$ on $\mathcal M$ is used, which introduces the integral and the factor of $f(...)$ in the integrand. We will repeat the proof now with the necessary alterations.

    The adjoint of $A^{\dag_f}$ is defined by the following equality of inner products, where $m \in \mathcal M$ and $\mu \in \mathcal A$:
    \begin{equation}
        \left\langle A^{\dag_f} m , \mu \right\rangle
        = \left\langle m , A \mu \right\rangle_f \, .
    \end{equation}
    Substituting $\mu$ with $\delta^n_r$ and using the $L^2$ norm of $\mathcal A$, the left-hand side becomes $(A^{\dag_f} m)(r)$. Then, solving for the right-hand side we find:
    \begin{subequations}
    \begin{align}
        (A^{\dag_f} m)(r)
        &= \left\langle m , A \delta^n_{r} \right\rangle_f \\
        &= \left\langle m , \Biggl\llbracket_{M}^{\mathbb C} (\hat \theta, r_\perp) \mapsto 
            \int_{-\infty}^{\infty} \mathrm d s \, \delta_r^n(r_\perp + s \hat \theta) \Biggr\rrbracket
        \right\rangle_f \\
        \intertext{The line integral through an $n-$dimensional Dirac-delta distribution becomes an $(n-1)-$dimensional Dirac-delta distribution in the $\mathbb R^{n-1}$ space orthogonal to the direction of the line. The projection of $r$ into the $\mathbb R^{n-1}$ subspace is $(r - \hat \theta(\hat \theta \cdot r))$.}
        &= \left\langle m , \Biggl\llbracket_{M}^{\mathbb C} (\hat \theta, r_\perp) \mapsto 
            \delta_0^{n-1}\left(r_\perp - (r - \hat \theta(\hat \theta \cdot r))\right) \Biggr\rrbracket
        \right\rangle_f \\
        &= \oint_{S^{n-1}} \mathrm d^{n-1} \hat \theta \int_{\mathrlap{\mathbb R^{n-1}}} \mathrm d^{n-1} r_\perp \; m(\hat \theta, r_\perp) f(\hat \theta, r_\perp) \delta_0^{n-1}\left(r_\perp - (r - \hat \theta(\hat \theta \cdot r))\right) \\
        &= \oint_{S^{n-1}} \mathrm d^{n-1} \hat \theta f(\hat \theta, r - \hat \theta(\hat \theta \cdot r)) m(\hat \theta, r - \hat \theta(\hat \theta \cdot r)) \\
        &= \oint_{S^{n-1}} \mathrm d^{n-1} \hat \theta f(\hat \theta, r)m(\hat \theta, r) \, .
    \end{align}
    \end{subequations}
\end{proof}

\begin{lemma}\label{lem:backproj}
    Let $m \in \mathcal M$. 
    Let $f$ be a real, nonnegative distribution on $M$.
    Define $m_{\hat \theta}(r) = m(\hat \theta, r)$. Likewise, define $f_{\hat \theta}(r) = f(\hat \theta, r)$.
    Then the following identity holds:
    \begin{equation}
        A^{\dag_f} m = \oint_{\mathrlap{S^{n-1}}} \; \, \mathrm d^{n-1} \hat \theta \, A_{\hat \theta}^{\dag_{f_{\hat \theta}}} m_{\hat \theta} \, .
    \end{equation}
    \begin{proof}
        The expression for $(A_{\hat \theta}^{\dag_{f_{\hat \theta}}}m_{\hat \theta})(r)$ is computed from lemma~\ref{lem:sdp_adjoint} and is equal to $f_{\hat \theta}(r) m_{\hat \theta}(r) = f(\hat \theta, r) m(\hat \theta, r)$. It is seen from lemma~\ref{lem:full_back_proj} that 
        \begin{equation}
        (A^{\dag_f}m)(r) = \oint \mathrm d^{n-1} \hat \theta f(\hat \theta, r) m(\hat \theta, r) \, .
        \end{equation}
        The proof is then immediate.
    \end{proof}
\end{lemma}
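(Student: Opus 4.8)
The plan is to observe that both sides of the claimed identity are elements of $\mathcal A$, i.e.\ functions $\mathbb R^n \to \mathbb C$, so it suffices to verify the equality pointwise in $r \in \mathbb R^n$. The two preceding lemmas already supply closed-form expressions for each side, so the whole argument reduces to matching these expressions under the integral over $S^{n-1}$.

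First I would evaluate the right-hand side. By the final clause in the definition of $A_{\hat\theta}^{\dag_h}$, the symbol $A_{\hat\theta}^{\dag_{f_{\hat\theta}}}$ denotes exactly the single-direction adjoint taken with weight $h = f_{\hat\theta} = f(\hat\theta,\cdot)$. Lemma~\ref{lem:sdp_adjoint} then gives the explicit action $(A_{\hat\theta}^{\dag_{f_{\hat\theta}}} m_{\hat\theta})(r) = f_{\hat\theta}(r)\, m_{\hat\theta}(r)$, which by the definitions of $m_{\hat\theta}$ and $f_{\hat\theta}$ equals $f(\hat\theta, r)\, m(\hat\theta, r)$. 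Integrating over the sphere, the value of the right-hand side at $r$ is
\begin{equation}
    \oint_{S^{n-1}} \mathrm d^{n-1}\hat\theta \,\bigl(A_{\hat\theta}^{\dag_{f_{\hat\theta}}} m_{\hat\theta}\bigr)(r) = \oint_{S^{n-1}} \mathrm d^{n-1}\hat\theta\, f(\hat\theta, r)\, m(\hat\theta, r) \, .
\end{equation}
Next I would invoke Lemma~\ref{lem:full_back_proj}, which states that $(A^{\dag_f} m)(r)$ equals precisely the same integral $\oint_{S^{n-1}} \mathrm d^{n-1}\hat\theta\, f(\hat\theta, r)\, m(\hat\theta, r)$. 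Since the two expressions coincide for every $r$, the two elements of $\mathcal A$ are equal, which establishes the identity.

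Honestly there is no substantive obstacle once the two prior lemmas are in hand: the content of this lemma is really just the statement that the full weighted backprojection decomposes as the directional integral of the single-direction backprojections, and that decomposition is already baked into how both adjoints were computed. The only points requiring minor care are bookkeeping ones: confirming that the notational convention $A_{\hat\theta}^{\dag_h} \equiv A_{\hat\theta}^{\dag_{h(\hat\theta,\cdot)}}$ for a distribution $h$ on $M$ is being applied consistently on the right-hand side, and, in the case where $f$ vanishes on a set of positive measure, checking that the limiting definition of the adjoints commutes with the $\hat\theta$-integration so that the two sides still agree in the limit. Both are routine and do not affect the structure of the argument.
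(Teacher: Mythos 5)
Your proposal is correct and follows essentially the same route as the paper's own proof: evaluate $(A_{\hat\theta}^{\dag_{f_{\hat\theta}}} m_{\hat\theta})(r) = f(\hat\theta,r)\,m(\hat\theta,r)$ via lemma~\ref{lem:sdp_adjoint}, integrate over $S^{n-1}$, and match this against the expression for $(A^{\dag_f}m)(r)$ from lemma~\ref{lem:full_back_proj}. Your added remarks on the $A_{\hat\theta}^{\dag_h} \equiv A_{\hat\theta}^{\dag_{h(\hat\theta,\cdot)}}$ convention and on the limiting definition for nonnegative $f$ are sensible bookkeeping that the paper leaves implicit, but they do not change the argument.
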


\begin{lemma}\label{lem:ATA_1}
    Let $f$ be a real, nonnegative distribution on $M$. 
    Define $f_{\hat \theta}(r) = f(\hat \theta, r)$.
    Then
    \begin{equation}
        A^{\dag_f} A = \oint_{\mathrlap{S^{n-1}}} \; \mathrm d^{n-1} \hat \theta \, A_{\hat \theta}^{\dag_{f_{\hat \theta}}} A_{\hat \theta} \, .
    \end{equation}
\end{lemma}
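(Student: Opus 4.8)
The plan is to reduce this operator identity to the backprojection identity already established in lemma~\ref{lem:backproj} by applying that lemma to the measurement field $m = A\mu$, and exploiting the compatibility between the full X-ray transform $A$ and its single-direction restrictions $A_{\hat \theta}$. Since both sides of the claimed equation are operators on $\mathcal A$, it suffices to verify the equality after applying them to an arbitrary $\mu \in \mathcal A$.

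First I would fix an arbitrary $\mu \in \mathcal A$ and observe, directly from comparing definition~\ref{def:xray_transform} with definition~\ref{def:sdp}, that the restriction of $A\mu$ to a fixed direction $\hat \theta$ coincides with the single-direction projection of $\mu$. Concretely, writing $(A\mu)_{\hat \theta}(r) = (A\mu)(\hat \theta, r)$ in the notation of lemma~\ref{lem:backproj}, both sides equal $\int_{-\infty}^{\infty} \mathrm d s \, \mu(r + s \hat \theta)$, which is precisely $(A_{\hat \theta} \mu)(r)$. Hence $(A\mu)_{\hat \theta} = A_{\hat \theta} \mu$ for every $\hat \theta \in S^{n-1}$.

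Second, I would apply lemma~\ref{lem:backproj} with the measurement field taken to be $m = A\mu$. That lemma yields
\begin{equation}
    A^{\dag_f} (A\mu) = \oint_{\mathrlap{S^{n-1}}} \; \mathrm d^{n-1} \hat \theta \, A_{\hat \theta}^{\dag_{f_{\hat \theta}}} (A\mu)_{\hat \theta} \, .
\end{equation}
Substituting the identification $(A\mu)_{\hat \theta} = A_{\hat \theta} \mu$ from the first step turns the integrand into $A_{\hat \theta}^{\dag_{f_{\hat \theta}}} A_{\hat \theta} \mu$, which is exactly the integrand appearing in the claimed formula. Since the resulting equality holds for every $\mu \in \mathcal A$, the two operators agree and the stated identity follows.

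There is no genuine obstacle here; this is a direct substitution rather than an independent argument. The only point requiring minor care is the consistent handling of the weight $f$ when it is not strictly positive, in which case $A^{\dag_f}$ and $A_{\hat \theta}^{\dag_{f_{\hat \theta}}}$ are defined through the limiting process over positive distributions approaching $f$. Because this limiting convention is inherited verbatim from lemma~\ref{lem:backproj} (and, beneath it, lemma~\ref{lem:sdp_adjoint}), the chain of equalities above is valid for such $f$ as well, and no separate approximation argument is needed.
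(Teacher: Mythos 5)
Your proposal is correct. The route is mildly different from the paper's: the paper proves the identity by evaluating both sides down to an explicit kernel --- the left-hand side via definition~\ref{def:xray_transform} and the explicit formula for $A^{\dag_f}$ in lemma~\ref{lem:full_back_proj}, the right-hand side via corollary~\ref{cor:ata_single} --- and observing that each equals $\mu \mapsto \bigl( r \mapsto \oint_{S^{n-1}} \mathrm d^{n-1} \hat\theta \, f(\hat\theta, r - \hat\theta(\hat\theta \cdot r)) \int_{-\infty}^{\infty} \mathrm d s \, \mu(r + s\hat\theta) \bigr)$. You instead factor through lemma~\ref{lem:backproj} applied to the measurement field $m = A\mu$, combined with the compatibility $(A\mu)_{\hat\theta} = A_{\hat\theta}\mu$, which you correctly justify by comparing definitions~\ref{def:xray_transform} and~\ref{def:sdp}. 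The two arguments rest on the same underlying computations, since lemma~\ref{lem:backproj} is itself deduced from lemmas~\ref{lem:sdp_adjoint} and~\ref{lem:full_back_proj}; what yours buys is modularity --- no re-derivation of the kernel, a clean operator-level statement verified on arbitrary $\mu \in \mathcal A$, and the limiting convention for merely nonnegative $f$ inherited verbatim from the earlier lemmas, exactly as you note --- while the paper's direct evaluation makes the common explicit kernel visible. Either way the lemma is a short consequence of the preceding material, and your substitution argument is complete as written.
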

\begin{proof}
    The left-hand side may be evaluated by a direct application of definition~\ref{def:xray_transform} and lemma~\ref{lem:full_back_proj}. The right-hand side may be evaluated by a direct application of corollary~\ref{cor:ata_single}. Each side evaluates to the following operator on $\mathcal A$:
    \begin{equation}
    \Biggl\llbracket_{\mathcal A}^{\mathcal A} \mu \mapsto \Bigl\llbracket_{\mathbb R^n}^{\mathbb C} r \mapsto
            \oint_{\mathrlap{S^{n-1}}} \; \mathrm d^{n-1} \hat \theta \, f(\hat \theta, r) \int_{-\infty}^\infty \mathrm d s \mu(r + s \hat \theta)
    \Bigr\rrbracket \Biggr\rrbracket \, .
    \end{equation}
\end{proof}

\begin{lemma}[A variant of the Fourier Slice Theorem]\label{lem:fst}
    Let $h$ be a nonnegative, real distribution on $M_{\hat \theta}$. Then the following identity holds:
    \begin{equation}
        A_{\hat \theta}^{\dag_h} A_{\hat \theta} = F_{\mathcal A}^{-1} \mathrm{diag}_{k}\left( 2 \pi  \delta(k \cdot \hat \theta) \right) F_{\mathcal A} \mathrm{diag}_{r}\left(
                h(r)
            \right) \, .
    \end{equation}
    \begin{proof}
        Using the explicit expression for $A_{\hat \theta}^{\dag_h} A_{\hat \theta}$ from corollary~\ref{cor:ata_single}, we first write the operator:
        \begin{equation}
            F_{\mathcal A} A_{\hat \theta}^{\dag_h} A_{\hat \theta} = \Biggl\llbracket_{\mathcal A}^{\mathcal A} \mu \mapsto \Bigl\llbracket_{\mathbb R^n}^{\mathbb C}
                k \mapsto (\sqrt{2 \pi})^{-n} \int_{\mathbb R^n} \mathrm d^n r \,
                    e^{-i r \cdot k} h(r) \int_{-\infty}^\infty \mathrm d s \, \mu(r + s \hat \theta)
            \Bigr\rrbracket \Biggr\rrbracket \, .
        \end{equation}
        We then reorder the integrations, translate the integration variable $r$, then reorder the integrations again. We will justify in a moment why it is valid to reorder the integrations in this manner.
        \begin{align*}
            & \int_{\mathbb R^n} \mathrm d^n r \,
                    e^{-i r \cdot k} h(r) \int_{-\infty}^\infty \mathrm d s \, \mu(r + s \hat \theta) \\
            &= \int_{-\infty}^\infty \mathrm d s \int_{\mathbb R^n} \mathrm d^n r \, e^{-i r \cdot k} h(r) \mu(r + s \hat \theta) \\
            &= \int_{-\infty}^\infty \mathrm d s \int_{\mathbb R^n} \mathrm d^n r \, e^{-i (r - s \hat \theta) \cdot k} h(r - s \hat \theta) \mu(r) \\
            &= \int_{-\infty}^\infty \mathrm d s \int_{\mathbb R^n} \mathrm d^n r \, e^{-i (r - s \hat \theta) \cdot k} h(r) \mu(r) \\
            &= \int_{-\infty}^\infty \mathrm d s \int_{\mathbb R^n} \mathrm d^n r \, e^{-i r \cdot k} e^{i s \hat \theta \cdot k} h(r) \mu(r) \, .
        \end{align*}
        We then use the well-known identity equating the following integral over $s$ with a Dirac-delta distribution when it is contained inside another integration:\footnote{
            The integral in this identity does not converge, but when it is part of a higher-dimensional integration which is known to converge, then it is valid to use this identity.}
        \begin{equation}
            \frac{1}{2 \pi} \int_{-\infty}^\infty \mathrm d s \, e^{isa} = \delta(a) \, .
        \end{equation}
        Formally, the validity of these manipulations may be justified by considering a limiting sequence of functions that approach the Dirac-delta distribution, and applying the Lebesgue dominated convergence theorem. (A similar integration is performed in a similar context with similar reasoning in \cite{katsevich2002theoretically}.)
        Applying the above identity, we find
        \begin{equation}
            F_{\mathcal A} A_{\hat \theta}^{\dag_h} A_{\hat \theta} = \Biggl\llbracket_{\mathcal A}^{\mathcal A} \mu \mapsto \Bigl\llbracket_{\mathbb R^n}^{\mathbb C}
                k \mapsto 
                    2 \pi \delta(k \cdot \hat \theta) \int_{\mathbb R^n} \mathrm d^n r \, h(r) \mu(r) e^{-i r \cdot k}
                \Bigr\rrbracket \Biggr\rrbracket  \, .
        \end{equation}
        Comparing this operator with the definition for $F_{\mathcal A}^{-1}$, the following conclusion is immediate:
        \begin{equation}
            F_{\mathcal A} A_{\hat \theta}^{\dag_h} A_{\hat \theta} = \mathrm{diag}_{k}(2 \pi \delta(k \cdot \hat \theta)) 
            F_{\mathcal A} 
            \mathrm{diag}_{r}\left(
                h(r)
            \right)\, .
        \end{equation}
        We then apply $F_{\mathcal A}^{-1}$ to the left of both sides, and use $F_{\mathcal A}^{-1} F_{\mathcal A} = 1$ to recover the identity
        \begin{equation}
            A_{\hat \theta}^{\dag_h} A_{\hat \theta} = F_{\mathcal A}^{-1} \mathrm{diag}_{k}(2 \pi \delta(k \cdot \hat \theta)) 
            F_{\mathcal A} 
            \mathrm{diag}_{r}\left(
                h(r)
            \right)\, .
        \end{equation}
    \end{proof}
\end{lemma}

\begin{corollary}[{Decomposition of $A^{\dag_f} A$}]\label{cor:ATA_decomp}
    For all distributions $f$, the forward-backprojection $A^{\dag_f} A$ can be decomposed as follows:
    \begin{equation}
        A^{\dag_f} A = \oint_{\mathrlap{S^{n-1}}} \; \mathrm d^{n-1} \hat \theta \,
        F_{\mathcal A}^{-1} \mathrm{diag}_{k}(2 \pi \delta(k \cdot \hat \theta)) F_{\mathcal A} 
        \mathrm{diag}_{r}\left(
            f(\hat \theta, r)
        \right) \, .
    \end{equation}
    \begin{proof}
        The equation is obtained by substituting lemma~\ref{lem:ATA_1} into lemma~\ref{lem:fst}.
    \end{proof}
\end{corollary}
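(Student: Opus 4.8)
The plan is to obtain this decomposition purely by composing the two immediately preceding results, Lemma~\ref{lem:ATA_1} and Lemma~\ref{lem:fst}; no fresh computation is required. First I would invoke Lemma~\ref{lem:ATA_1} to express the forward-backprojection as a superposition over directions,
\begin{equation*}
    A^{\dag_f} A = \oint_{S^{n-1}} \mathrm d^{n-1} \hat \theta \, A_{\hat \theta}^{\dag_{f_{\hat \theta}}} A_{\hat \theta} \, ,
\end{equation*}
where $f_{\hat \theta}(r) = f(\hat \theta, r)$ is the single-direction weight extracted from $f$. This reduces the problem to rewriting each single-direction operator in the integrand.

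Next, for each fixed $\hat \theta \in S^{n-1}$, I would apply the Fourier-slice variant of Lemma~\ref{lem:fst} to the integrand operator $A_{\hat \theta}^{\dag_{f_{\hat \theta}}} A_{\hat \theta}$, taking the distribution $h$ appearing in that lemma to be precisely $h = f_{\hat \theta}$. This replaces the single-direction operator by $F_{\mathcal A}^{-1} \mathrm{diag}_{k}(2 \pi \delta(k \cdot \hat \theta)) F_{\mathcal A} \mathrm{diag}_{r}(f_{\hat \theta}(r - \hat \theta(\hat \theta \cdot r)))$, and since by definition $f_{\hat \theta}(r - \hat \theta(\hat \theta \cdot r)) = f(\hat \theta, r - \hat \theta(\hat \theta \cdot r))$, substituting this termwise back under the spherical integral yields the claimed identity at once.

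The only point requiring care is bookkeeping rather than analysis: the direction $\hat \theta$ enters the integrand operator in two places simultaneously---inside the slice factor $\delta(k \cdot \hat \theta)$ and inside the spatial weight $\mathrm{diag}_{r}(f(\hat \theta, r - \hat \theta(\hat \theta \cdot r)))$---so these factors must stay together inside the $\oint \mathrm d^{n-1}\hat\theta$ and must not be pulled apart. Linearity of integration is what permits the per-direction substitution to pass through the integral. I would also note that Lemma~\ref{lem:fst} is stated for a fixed, nonnegative, real single-direction weight, so its hypotheses hold for every $\hat \theta$ with $h = f_{\hat\theta}$; the case in which $f$ vanishes on a set of positive measure is already subsumed by the limiting definition of the weighted adjoints used throughout, so it needs no separate treatment here. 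The step I expect to be least automatic---though still immediate---is confirming the weight identification $h \mapsto f_{\hat\theta}$ is the right one, i.e. that the projected spatial argument $r - \hat\theta(\hat\theta\cdot r)$ in Lemma~\ref{lem:fst} is exactly the one produced by Lemma~\ref{lem:ATA_1}; this follows directly from the definitions of $A_{\hat\theta}$ and the weighted adjoint.
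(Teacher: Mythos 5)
Your proposal is correct and follows exactly the paper's own argument: the paper proves the corollary by substituting Lemma~\ref{lem:ATA_1} into Lemma~\ref{lem:fst}, which is precisely your composition of the directional decomposition with the termwise application of the Fourier-slice variant under the weight identification $h = f_{\hat\theta}$. Your additional remarks on linearity and on the nonnegativity hypotheses are sound bookkeeping that the paper leaves implicit.
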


\begin{lemma}\label{lem:ATA_decomp_simplified}
    For distributions $f(\hat \theta, r)$ of the form $f(\hat \theta, r) = d(\hat \theta)$, the following identity holds:
    \begin{equation}
        A^{\dag_f} A = F_{\mathcal A}^{-1} \mathrm{diag}_k\left(
            2 \pi |k|^{-1} \oint_{\mathrlap{\{\hat \theta \in S^{n-1} : \hat \theta \cdot k = 0\}}} \; \mathrm d^{n-2} \hat \theta \;\;\;\, d(\hat \theta)
            \right) F_{\mathcal A} \, . 
    \end{equation}
    \begin{proof}
        We begin with corollary~\ref{cor:ATA_decomp}.
        In the circumstance where $f(\hat \theta, r) = d(\hat \theta)$, it is possible to pull the integration over $\hat \theta$ into the diagonal operator acting on the Fourier basis viz.
        \begin{subequations}
        \begin{align}
            A^{\dag_f} A &= \oint_{\mathrlap{S^{n-1}}} \; \mathrm d^{n-1} \hat \theta \,
            F_{\mathcal A}^{-1} \mathrm{diag}_{k}(2 \pi \delta(k \cdot \hat \theta)) F_{\mathcal A} 
            \mathrm{diag}_{r}\left(
                f(\hat \theta, r)
            \right) \\
            &= \oint_{\mathrlap{S^{n-1}}} \; \mathrm d^{n-1} \hat \theta \,
            F_{\mathcal A}^{-1} \mathrm{diag}_{k}(2 \pi \delta(k \cdot \hat \theta)) F_{\mathcal A} 
            \mathrm{diag}_{r}\left(
                d(\hat \theta)
            \right) \\
            &= \oint_{\mathrlap{S^{n-1}}} \; \mathrm d^{n-1} \hat \theta \,
            F_{\mathcal A}^{-1} \mathrm{diag}_{k}(2 \pi \delta(k \cdot \hat \theta)) F_{\mathcal A} 
            d(\hat \theta) \\
            &= \oint_{\mathrlap{S^{n-1}}} \; \mathrm d^{n-1} \hat \theta \,
            F_{\mathcal A}^{-1} \mathrm{diag}_{k}\left(2 \pi \delta(k \cdot \hat \theta) d(\hat \theta)\right) F_{\mathcal A} \\
            &= F_{\mathcal A}^{-1} \mathrm{diag}_{k}\left(2 \pi \oint_{\mathrlap{S^{n-1}}} \; \mathrm d^{n-1} \hat \theta \, \delta(k \cdot \hat \theta) d(\hat \theta)\right) F_{\mathcal A} \label{eq:ATA_proof_01} \, .
        \end{align}
        \end{subequations}
        In this form, it is possible to analytically integrate out the Dirac-delta distribution. To do this, we introduce the hyperspherical coordinate system for $S^{n-1}$ found in \cite{blumenson1960derivation}. A point $\hat \theta \in S^{n-1}$ is given by $\hat \theta = (\phi_1, \phi_2, \dots \phi_{n-1})$. We arrange the coordinate system so that the `top' of the sphere, defined by $\phi_1 = 0$, aligns with $k \in \mathbb R^n$. The Dirac-delta distribution $\delta(k \cdot \hat \theta)$ becomes
        \begin{equation}
            \delta(k \cdot \hat \theta) = \delta(|k| \cos \phi_1) \, .
        \end{equation}
        The coordinates $\phi_i$ are locally orthogonal everywhere, and for small variations in the $\phi_1$ coordinate alone, the distance swept out on the sphere is $\mathrm d s^2 = \mathrm d \phi_1^2$. In other words, the metric matrix $g$ of the hyperspherical coordinate system is diagonal, with the top-left entry $g_{11}$, corresponding to $\phi_1$, equal to $1$. Therefore, an integration over $S^{n-1}$ of the following form may be simplified as follows, where $x(\hat \theta)$ is arbitrary:
        \begin{subequations}
        \begin{align}
            \oint_{\mathrlap{S^{n-1}}} \; \mathrm d^{n-1} \hat \theta \, x(\hat \theta) \delta(k \cdot \hat \theta) 
            &= \oint_{{S^{n-1}}} \left(\prod_{i=1}^{n-1} g_{ii} \mathrm d \phi_i\right) \, x(\phi_1, \dots, \phi_{n-1}) \delta(|k| \cos \phi_1) \\
            &= \int_0^\pi \mathrm d \phi_1 \oint_{{S^{n-2}}} \left(\prod_{i=2}^{n-1} g_{ii} \mathrm d \phi_i\right) \, x(\phi_1, \dots, \phi_{n-1}) \delta(|k| \cos \phi_1) \, .
        \end{align}
        \end{subequations}
        The value of $\phi_1$ is everywhere the same on the support of the Dirac-delta distribution: $\phi_1 = \pi/2$. For this reason, it is valid to replace $\phi_1$ with $\pi/2$ everywhere inside the integrand, except inside the Dirac-delta distribution. Then, we may reorder the integration and evaluate like so
        \begin{subequations}
        \begin{align}
            & \quad\int_0^\pi \mathrm d \phi_1 \oint_{{S^{n-2}}} \left(\prod_{i=2}^{n-1} g_{ii} \mathrm d \phi_i\right) \, x(\pi/2, \dots, \phi_{n-1}) \delta(|k| \cos \phi_1) \\
            &= \oint_{{S^{n-2}}} \left(\prod_{i=2}^{n-1} g_{ii} \mathrm d \phi_i\right) \, x(\pi/2, \phi_2, \dots, \phi_{n-1}) \int_0^\pi \mathrm d \phi_1  \delta(|k| \cos \phi_1) \\
            &= \oint_{{S^{n-2}}} \left(\prod_{i=2}^{n-1} g_{ii} \mathrm d \phi_i\right) \, x(\pi/2, \phi_2, \dots, \phi_{n-1}) |k|^{-1} \, .
        \end{align}
        \end{subequations}
        The remaining integration over the coordinates $(\phi_2, \phi_3, \dots \phi_{n-1})$ is an integration over a subsphere $S^{n-2} \subset S^{n-1}$ with the inherited metric. Therefore, we have the identity 
        \begin{align*}
            \oint_{\mathrlap{S^{n-1}}} \; \mathrm d^{n-1} \hat \theta \,
            x(\hat \theta) \delta(k \cdot \hat \theta) 
            &= |k|^{-1} \oint_{\mathrlap{\{\hat \theta \in S^{n-1} : \hat \theta \cdot k = 0\}}} \; \mathrm d^{n-2} \hat \theta \;\;\;\,
            x(\hat \theta) \, .
        \end{align*}
        Applying this identity to \eqref{eq:ATA_proof_01} completes the proof.\footnote{
            One subtle point is how to reconcile $|k|^{-1}$ for $k = 0$. However, functions in $\mathcal A$ must have finite norm and so the zero-frequency component of the Fourier transform is always $0$. The Fourier transform $F_{\mathcal A}$ may be interpreted as lacking a $k = 0$ component, or one may simply treat $|k|^{-1}$ as if it were $0$ when $k = 0$.}
    \end{proof}
\end{lemma}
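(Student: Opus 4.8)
The plan is to start from the decomposition of $A^{\dag_f} A$ already established in corollary~\ref{cor:ATA_decomp}, which writes the operator as an integral over $\hat \theta \in S^{n-1}$ of Fourier-conjugated diagonal operators, the rightmost factor being multiplication in position space by $f(\hat \theta, r - \hat \theta(\hat \theta \cdot r))$. The crucial simplification under the hypothesis $f(\hat \theta, r_\perp) = d(\hat \theta)$ is that this position-space factor collapses: the argument $r - \hat \theta(\hat \theta \cdot r)$ is merely the projection of $r$ onto the plane orthogonal to $\hat \theta$, and $f$ evaluated there returns the scalar $d(\hat \theta)$ irrespective of $r$. Hence $\mathrm{diag}_r\!\left(d(\hat \theta)\right)$ is scalar multiplication by $d(\hat \theta)$, which commutes freely through both Fourier transforms.

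With that observation I would absorb the scalar $d(\hat \theta)$ into the Fourier-domain symbol $2\pi \delta(k \cdot \hat \theta)$, and then exchange the $\hat \theta$-integration with the Fourier conjugation $F_{\mathcal A}^{-1}(\,\cdot\,)F_{\mathcal A}$, which is legitimate since $F_{\mathcal A}^{\pm 1}$ do not depend on $\hat \theta$. This brings the operator into the form $F_{\mathcal A}^{-1}\,\mathrm{diag}_k\!\left(2\pi \oint_{S^{n-1}} \mathrm d^{n-1}\hat \theta\, \delta(k\cdot \hat \theta)\, d(\hat \theta)\right) F_{\mathcal A}$, so the entire problem reduces to evaluating, for each fixed $k$, the single scalar integral $\oint_{S^{n-1}} \mathrm d^{n-1}\hat \theta\, \delta(k\cdot \hat \theta)\, d(\hat \theta)$.

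The main obstacle is precisely this last integral, where a Dirac-delta is integrated against the surface measure of $S^{n-1}$. I would handle it by choosing a hyperspherical coordinate system oriented so that its polar axis $\phi_1 = 0$ points along $k$; then $\delta(k \cdot \hat \theta) = \delta(|k|\cos \phi_1)$, supported on the equator $\phi_1 = \pi/2$, which is exactly the great subsphere $S^{n-2} = \{\hat \theta : \hat \theta \cdot k = 0\}$. The delicate point is the Jacobian: one must verify that in these coordinates the metric coefficient associated with $\phi_1$ is unity, so that integrating out $\delta(|k|\cos \phi_1)$ over $\phi_1$ contributes exactly the factor $|k|^{-1}$ while leaving the induced surface measure on the orthogonal subsphere intact. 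This yields $\oint_{S^{n-1}} \mathrm d^{n-1}\hat \theta\, \delta(k\cdot \hat \theta)\, d(\hat \theta) = |k|^{-1} \oint_{\{\hat \theta \cdot k = 0\}} \mathrm d^{n-2}\hat \theta\, d(\hat \theta)$ — which one recognises as $|k|^{-1}\mathcal I[d](k/|k|)$, matching the symbol appearing in the main theorems — and substituting back gives the claimed identity. I would close by addressing the harmless singularity at $k = 0$: since every function in $\mathcal A$ has finite $L^2$ norm, its zero-frequency Fourier component vanishes, so the value assigned to $|k|^{-1}$ at the origin is immaterial.
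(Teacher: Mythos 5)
Your proposal is correct and follows essentially the same route as the paper's own proof: starting from corollary~\ref{cor:ATA_decomp}, commuting the scalar $d(\hat \theta)$ through the Fourier conjugation, exchanging the $\hat \theta$-integration with $F_{\mathcal A}^{-1}(\,\cdot\,)F_{\mathcal A}$, and evaluating $\oint_{S^{n-1}} \mathrm d^{n-1}\hat \theta \, \delta(k \cdot \hat \theta)\, d(\hat \theta)$ in hyperspherical coordinates aligned with $k$, with the key verification that the metric coefficient for $\phi_1$ is unity so the delta contributes exactly $|k|^{-1}$ times the induced measure on the equatorial subsphere. Your closing remark on the $k = 0$ singularity likewise matches the paper's footnote.
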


\begin{lemma}\label{lem:inv_ATA}
Let $f(\hat \theta, r_\perp)$ be a distribution of the form $f(\hat \theta, r_\perp) = d(\hat \theta)$. The forward-backprojection $A^{\dag_f} A$ has an inverse if and only if
    \begin{equation}
        \forall k \in (\mathbb R^n - \{0\}), \qquad 0 \neq \oint_{\mathrlap{\{\hat \theta \in S^{n-1} : \hat \theta \cdot k = 0\}}} \; \mathrm d^{n-2} \hat \theta \;\;\;\, d(\hat \theta) \, .
    \end{equation}
    (In words: the integration of $d(\hat \theta)$ around any given great-circle must be nonzero.)
    When the inverse exists, it is given by
    \begin{equation}
        (A^{\dag_f} A)^{-1} = F_{\mathcal A}^{-1} \mathrm{diag}_k\left(
            \frac{|k|}{2 \pi} \left/ \oint_{\mathrlap{\{\hat \theta \in S^{n-1} : \hat \theta \cdot k = 0\}}} \; \mathrm d^{n-2} \hat \theta \;\;\;\, d(\hat \theta) \right.
            \right) F_{\mathcal A} \, . 
    \end{equation}
    \begin{proof}
        The proof is straightforward from lemma~\ref{lem:ATA_decomp_simplified}. 
    \end{proof}
\end{lemma}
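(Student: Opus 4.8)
The plan is to read the result straight off the diagonal (Fourier-multiplier) representation of $A^{\dag_f}A$ supplied by lemma~\ref{lem:ATA_decomp_simplified}, and to invert it one Fourier mode at a time. Writing $g(k) = \oint_{\{\hat\theta \in S^{n-1}: \hat\theta\cdot k = 0\}} \mathrm d^{n-2}\hat\theta\, d(\hat\theta)$ for the great-circle integral appearing there, that lemma reads
\[
    A^{\dag_f}A = F_{\mathcal A}^{-1}\,\mathrm{diag}_k\!\left(\lambda(k)\right)\,F_{\mathcal A}, \qquad \lambda(k) = \frac{2\pi\, g(k)}{|k|} \, .
\]
Since $F_{\mathcal A}$ is unitary (definition~\ref{def:vol_fourier_transform}), conjugation by it is a bijection on operators, so $A^{\dag_f}A$ is invertible if and only if the multiplication operator $\mathrm{diag}_k(\lambda(k))$ is invertible; and a multiplication operator in the Fourier basis is invertible precisely when its symbol is nonzero (almost) everywhere, in which case its inverse is multiplication by the reciprocal symbol.

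First I would note that the factor $2\pi/|k|$ is finite and nonzero for every $k \in \mathbb R^n - \{0\}$, so the zero set of $\lambda$ coincides with that of $g$. Hence the condition $\lambda(k)\neq 0$ for all $k\neq 0$ is equivalent to $g(k)\neq 0$ for all $k \neq 0$, which is exactly the stated great-circle condition (and, because $g(k)$ depends only on the direction $\hat k = k/|k|$, equivalently to $\mathcal I[d](\hat\theta)\neq 0$ for all $\hat\theta \in S^{n-1}$). Assuming this holds, I would then produce the inverse by reciprocating the symbol,
\[
    (A^{\dag_f}A)^{-1} = F_{\mathcal A}^{-1}\,\mathrm{diag}_k\!\left(\frac{1}{\lambda(k)}\right)\,F_{\mathcal A} = F_{\mathcal A}^{-1}\,\mathrm{diag}_k\!\left(\frac{|k|}{2\pi\, g(k)}\right)\,F_{\mathcal A} \, ,
\]
which is the claimed formula. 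That this is a genuine two-sided inverse follows because the two diagonal operators compose to the identity multiplier while the flanking copies of $F_{\mathcal A}$ and $F_{\mathcal A}^{-1}$ cancel against their partners.

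The points requiring a little care --- rather than genuine obstacles --- are the behaviour at $k = 0$ and the decay of the symbol at infinity. The former is dispatched exactly as in the footnote to lemma~\ref{lem:ATA_decomp_simplified}: elements of $\mathcal A$ have vanishing zero-frequency component, so the single point $k = 0$ carries no $L^2$ mass and may be ignored (equivalently, one sets $|k|^{-1} = 0$ at $k = 0$). The latter is the observation that, since $g$ depends only on $\hat k$ and is therefore bounded, $\lambda(k) \sim |k|^{-1}$ decays as $|k| \to \infty$, so the reciprocal symbol $|k|/(2\pi g(k))$ grows and the inverse is in general an \emph{unbounded} Fourier multiplier. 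I would therefore phrase invertibility in the formal/distributional sense used throughout the paper, with $(A^{\dag_f}A)^{-1}$ understood on the dense domain of those $\mu$ for which $\mathrm{diag}_k(|k|/(2\pi g(k)))F_{\mathcal A}\mu$ is again square-integrable, rather than asserting a bounded inverse. With these conventions in place the identity is immediate.
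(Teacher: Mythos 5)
Your proposal is correct and follows exactly the route the paper takes: its proof of this lemma is the single line ``straightforward from lemma~\ref{lem:ATA_decomp_simplified}'', i.e.\ precisely your reading of the diagonal representation and reciprocation of the symbol. Your additional care about $k=0$ (matching the paper's footnote) and about the unboundedness of the reciprocal multiplier merely makes explicit what the paper leaves implicit.
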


\begin{lemma}\label{lem:min_norm}
    Let $\mu \in \mathcal A$ and $m \in \mathcal M$ be real-valued functions. Let $f$ be a real, nonnegative distribution on $M$.
    If and only if the inverse $(A^{\dag_f} A)^{-1}$ exists, then the following `error function' between $m$ and $A \mu$ has a unique minimum solution in $\mathcal A$, given on the right-hand side:
    \begin{equation}\label{eq:mpinv_th}
        \mathrm{arg~min}_{\mu \in \mathcal A} \oint_{\mathrlap{S^{n-1}}} \;\mathrm d^{n-1} \hat \theta \int_{\mathrlap{\hat \theta^\perp}} \mathrm d^{n-1} r_\perp f(\hat \theta, r_\perp) \left( (m-A\mu)(\hat \theta, r_\perp) \right)^2
        =
        (A^{\dag_f} A)^{-1} A^{\dag_f} m \, .
    \end{equation}
    \begin{proof}
    We prove this for strictly positive $f$, from which the answer follows for nonnegative $f$ by a limiting process.
    Assuming $f$ is positive, we may identify the following integral expression as an inner product between vectors in $\mathcal M$:
    \begin{equation}
        \oint_{\mathrlap{S^{n-1}}} \;\mathrm d^{n-1} \hat \theta \int_{\mathrlap{\hat \theta^\perp}} \mathrm d^{n-1} r_\perp f(\hat \theta, r_\perp) a^*(\hat \theta, r_\perp) b(\hat \theta, r_\perp)
        =
        \langle a, b \rangle_f \, .
    \end{equation}
    Equation \eqref{eq:mpinv_th} may thus be rewritten:
    \begin{equation}
        \mathrm{arg~min}_{\mu \in \mathcal A} \langle m - A \mu , m - A \mu \rangle_f
        =
        (A^{\dag_f} A)^{-1} A^{\dag_f} m \, .
    \end{equation}
    We now use the calculus of variations to prove this equation correct.\footnote{For those unfamiliar: a `variation' is an infinitesimal change. When applying variations to differentiable functions, the change in the function value is equal to the change predicted by its derivative. E.g. for a differentiable function $f : \mathbb R \rightarrow \mathbb R$, we have $\delta f(a) = f'(a) \delta a$. When variational equations are used, there is some implicit definition about what is being varied, e.g. in the previous equation, we were not allowing $f$ itself to vary. If we did allow $f$ to vary, then we'd have $\delta f(a) = f'(a) \delta a + (\delta f) (a)$, because the product rule of differentiation applies, and second-order derivatives are effectively zero because the variation is taken to be an infinitesimally small change.} 
    We use the following basic identities related to variation: 
    \begin{subequations}
    \begin{align}
        \delta\langle a, b \rangle &= \langle \delta a , b\rangle + \langle a , \delta b \rangle \, , \\
        \delta (ab) &= (\delta a) b + a (\delta b) \, .
    \end{align}
    \end{subequations}
    The choice of $\mu$ which minimises $\langle m - A \mu , m - A \mu \rangle_f$ is determined by solving for the $\mu$ such that the variation is $0$ under all possible variations $\delta \mu$. Expanding the variation explicitly, we have:
    \begin{align}
        \delta \left\langle 
                m - A \mu, m - A \mu
        \right\rangle_f
        &= \delta \left( \Big.
            \langle m, m \rangle_f + \left\langle 
                A \mu, A \mu
            \right\rangle_f - 2 \left\langle m, A \mu \right\rangle_f
        \right) \\
        &= \delta \langle m, m \rangle_f + \delta \left\langle 
                A \mu, A \mu
            \right\rangle_f - 2 \delta \left\langle m, A \mu \right\rangle_f \\
        &= 2 \left\langle 
                A \mu, A \delta \mu
            \right\rangle_f - 2 \left\langle m, A \delta\mu \right\rangle_f \\
        &= 2 \left\langle 
                A \mu - m, A \delta \mu
            \right\rangle_f \\
        &= 2 \left\langle 
                A^{\dag_f}(A \mu - m), \delta \mu
            \right\rangle_f \, . \label{eq:vary_01}
    \end{align}
    In the last line, we have used the fact that $A^{\dag_f}$ is the adjoint of $A$ with respect to the inner product $\langle \cdot, \cdot \rangle_f$ between measurement vectors.
    Since $\delta \mu$ is free to vary without any constraint, \eqref{eq:vary_01} can only be zero for all variations $\delta \mu$ when the left term of the inner product is zero. We therefore find that a value of $\mu$ satisfying
    \begin{equation}
        A^{\dag_f}A \mu = A^{\dag_f} m
    \end{equation}
    achieves the minimum value for the error function.
    Assuming that the inverse $(A^{\dag_f}A)^{-1}$ exists, we can then conclude that
    \begin{equation}
        \mu = (A^{\dag_f}A)^{-1} A^{\dag_f} m \, .
    \end{equation}
    If the inverse $(A^{\dag_f}A)^{-1}$ does not exist, then there are multiple solutions for $\mu$ which differ by elements of the nullspace of $A^{\dag f} A$.
    \end{proof}
\end{lemma}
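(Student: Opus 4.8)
The plan is to recognise the weighted double integral being minimised as the squared norm of the residual $m - A\mu$ with respect to the $f$-weighted inner product $\langle\cdot,\cdot\rangle_f$ on $\mathcal M$ that was introduced in the definition of $A^{\dag_f}$. Since $\mu$ and $m$ are real-valued, the integrand $f(\hat\theta,r_\perp)\big((m-A\mu)(\hat\theta,r_\perp)\big)^2$ is exactly the diagonal term of this inner product, so the objective is $\langle m-A\mu,\,m-A\mu\rangle_f$ and the problem becomes a standard least-squares projection in a (weighted) Hilbert space. I would first treat the case where $f$ is strictly positive, so that $\langle\cdot,\cdot\rangle_f$ is a genuine inner product and $A^{\dag_f}$ is honestly the adjoint of $A$; the merely-nonnegative case then follows by the limiting process already built into the definition of $A^{\dag_f}$.

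Next I would derive the normal equations. Expanding the quadratic and taking the first variation in $\mu$ gives
\begin{equation}
    \delta\langle m-A\mu,\,m-A\mu\rangle_f = 2\langle A\mu - m,\,A\,\delta\mu\rangle_f = 2\langle A^{\dag_f}(A\mu-m),\,\delta\mu\rangle \, ,
\end{equation}
where the second equality is precisely the defining adjoint relation for $A^{\dag_f}$. Because $\delta\mu\in\mathcal A$ is unconstrained, stationarity forces the left factor to vanish, yielding $A^{\dag_f}A\mu = A^{\dag_f}m$, and under the hypothesis that $(A^{\dag_f}A)^{-1}$ exists this solves uniquely to $\mu = (A^{\dag_f}A)^{-1}A^{\dag_f}m$, which is the claimed formula.

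The step I expect to require the most care is confirming that this stationary point is a genuine \emph{global} minimiser rather than merely a critical point, and that it is unique. The clean way to settle both is to avoid second-variation bookkeeping and instead complete the square: writing $\mu_\star = (A^{\dag_f}A)^{-1}A^{\dag_f}m$ and using the normal equation to kill the cross term, one gets
\begin{equation}
    \langle m-A\mu,\,m-A\mu\rangle_f = \langle m-A\mu_\star,\,m-A\mu_\star\rangle_f + \langle A(\mu-\mu_\star),\,A(\mu-\mu_\star)\rangle_f \, .
\end{equation}
Since $f\ge 0$, the second term is nonnegative, so $\mu_\star$ attains the minimum; and invertibility of $A^{\dag_f}A$ (which forces $A$ to be injective on the relevant subspace) guarantees the minimiser is unique, justifying the $\mathrm{arg~min}$. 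Finally I would remark that the strictly-positive assumption enters only through the adjoint identity, and the nonnegative case is recovered by passing to the limit of the positive approximants used to define $A^{\dag_f}$, so no extra argument is needed there.
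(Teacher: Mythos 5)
Your proof is correct and follows essentially the same route as the paper: identify the weighted double integral as $\langle m-A\mu,\,m-A\mu\rangle_f$, treat strictly positive $f$ first (recovering nonnegative $f$ via the limiting process built into the definition of $A^{\dag_f}$), derive the normal equations $A^{\dag_f}A\mu = A^{\dag_f}m$ from the first variation together with the adjoint relation, and invert. Your one genuine addition is the completing-the-square step, which upgrades the paper's argument --- whose variational calculation by itself only certifies a stationary point --- to an explicit proof that $\mu_\star = (A^{\dag_f}A)^{-1}A^{\dag_f}m$ is the \emph{global} minimiser and is unique, a small but real strengthening that costs nothing since the cross term vanishes by the normal equation.
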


\subsection{Proofs related to convolution-backprojection}\label{sec:proofs_cb}

\begin{identity}[A variant of the Fourier Slice Theorem]\label{id:vol_proj_filt}
    Let $h$ be a real, nonegative distribution on $M_{\hat \theta}$.
    Then for any function $g : \mathbb R^n \rightarrow \mathbb C$,
    \begin{equation}
        F_{\mathcal A}^{-1} \mathrm{diag}_k(g(k)) F_{\mathcal A} A_{\hat \theta}^{\dag_h} \!=\! \mathrm{diag}_{r_\perp} \!\! \left(h(r_\perp)\right)^{-1} A_{\hat \theta}^{\dag_h} F_{\hat \theta}^{-1} \mathrm{diag}_{k_\perp} \!\! \left(g(k_\perp)\right) F_{\hat \theta} \mathrm{diag}_{r_\perp} \!\! \left(h(r_\perp)\right) .
    \end{equation}
    \begin{proof}
        We first write out the following operator using lemma~\ref{lem:sdp_adjoint} to describe $A_{\hat \theta}^{\dag_h}$:
        \begin{align}
            &(F_{\mathcal A} A_{\hat \theta}^{\dag_h})(p) 
            = \Biggl\llbracket_{\mathbb R^n}^{\mathbb C} k \mapsto           
                (\sqrt{2\pi})^{-n} \int_{\mathrlap{\mathbb R^n}} \mathrm d^{n} r \,
                h(r) p(r)
                e^{-ir \cdot k}
            \Biggr\rrbracket \, .
        \end{align}
        We split the integration into the one dimensional integration orthogonal to $\hat \theta$, and the rest of the dimensions. The point $r$ decomposes into $r_\perp + \hat \theta(\hat \theta \cdot r)$. Likewise, we decompose $k$ per $k = k_\perp + \hat \theta(\hat \theta \cdot k)$.
        \begin{subequations}
        \begin{align}
            \int_{\mathrlap{\mathbb R^n}} \mathrm d^{n} r \,
                h(r) p(r)
                e^{-ir \cdot k} 
            &= \int_{\mathrlap{\hat \theta^\perp}} \mathrm d^{n-1} r_\perp \!\!
                \int_{\mathrlap{-\infty}}^{\mathrlap{\infty}} \!\! \mathrm d s \,
                h(r_\perp) p(r_\perp)
                e^{-i(r_\perp \! + s \hat \theta) \cdot k} \\
            &= \int_{\mathrlap{\hat \theta^\perp}} \mathrm d^{n-1} r_\perp \!\!
                \int_{\mathrlap{-\infty}}^{\mathrlap{\infty}} \!\! \mathrm d s \,
                h(r_\perp) p(r_\perp)
                e^{-i r_\perp \cdot k_\perp} e^{-i s \hat \theta \cdot k} \\
            \shortintertext{then, using $\int_{-\infty}^{\infty} \mathrm d s e^{-isa} = 2 \pi \delta(a)$,}
            &= 2 \pi \delta(\hat \theta \cdot k) \int_{\mathrlap{\hat \theta^\perp}} \mathrm d^{n-1} r_\perp \! \, 
                h(r_\perp) p(r_\perp)
                e^{-i r_\perp \cdot k_\perp}  \label{eq:fst_intermediate_01} \, .
        \end{align}
        \end{subequations}
        
        We operate on the result with $F_{\mathcal A}^{-1} \mathrm{diag}_k(g(k))$ to produce:
        \begin{subequations}
        \begin{align}
            &(F_{\mathcal A}^{-1} \mathrm{diag}_k(g(k)) F_{\mathcal A} A_{\hat \theta}^{\dag_h})(p)(r)\\
            &= \frac{1}{(2\pi)^n}  \int_{\mathrlap{\mathbb R^n}} \mathrm d^n k' \, g(k') 2 \pi \delta(\hat \theta \cdot k) e^{ik' \cdot r} \int_{\mathrlap{\hat \theta^\perp}} \mathrm d^{n-1} r_\perp' \! \, 
                h(r_\perp') p(r_\perp')
                e^{-i r_\perp' \cdot k_\perp'} \\
            &= \frac{1}{(2\pi)^{n-1}}  \int_{\mathrlap{\hat \theta^\perp}} \mathrm d^n k' \, g(k_\perp') e^{ik_\perp' \cdot r_\perp} \int_{\mathrlap{\hat \theta^\perp}} \mathrm d^{n-1} r_\perp' \! \, 
                h(r_\perp') p(r_\perp')
                e^{-i r_\perp' \cdot k_\perp'} \\
            &= \frac{1}{(2\pi)^{n-1}}  \int_{\mathrlap{\hat \theta^\perp}} \mathrm d^n k' \, g(k_\perp') \int_{\mathrlap{\hat \theta^\perp}} \mathrm d^{n-1} r_\perp' \! \, 
                h(r_\perp') p(r_\perp')
                e^{i (r_\perp - r_\perp') \cdot k_\perp'} \\
            &= \left(
            \mathrm{diag}_{r_\perp} \!\! \left(h(r_\perp)\right)^{-1} A_{\hat \theta}^{\dag_h} F_{\hat \theta}^{-1} \mathrm{diag}_{k_\perp} \!\! \left(g(k_\perp)\right) F_{\hat \theta} \mathrm{diag}_{r_\perp} \!\! \left(h(r_\perp)\right)
            \right)(p)(r) \, .
        \end{align}
        \end{subequations}
    \end{proof}
\end{identity}

\begin{lemma}[Filtered Backprojection form of Moore-Penrose Inverse]\label{lem:inv_ATA_fbp}
    Let $f(\hat \theta, r_\perp)$ be a real, nonnegative distribution of the form $f(\hat \theta, r_\perp) = d(\hat \theta)$. Define $f_{\hat \theta}(r_\perp) = f(\hat \theta, r_\perp) = d(\hat \theta)$. 
    If the forward-backprojection $A^{\dag_f} A$ has an inverse, then
    \begin{equation}
        (A^{\dag_f} A)^{-1} A^{\dag_f} = A^{\dag_f} F_{\mathcal M}^{-1} \mathrm{diag}_k\!\left(
            \frac{|k|}{2 \pi \mathcal I[d](k/|k|)}
            \right) F_{\mathcal M} \, .
    \end{equation}
    \begin{proof}
    First, using lemma~\ref{lem:inv_ATA}, we have
    \begin{subequations}
    \begin{align}
         &(A^{\dag_f} A)^{-1} A^{\dag_f} m = 
         F_{\mathcal A}^{-1} \mathrm{diag}_k\left(
            \frac{|k|}{2 \pi \mathcal I[d](k/|k|)}
            \right) F_{\mathcal A}
         A^{\dag_f} m \\
         \shortintertext{then we apply lemma~\ref{lem:backproj},}
         &= F_{\mathcal A}^{-1} \mathrm{diag}_k\left(
            \frac{|k|}{2 \pi \mathcal I[d](k/|k|)}
            \right) F_{\mathcal A}
         \oint_{\mathrlap{S^{n-1}}} \; \, \mathrm d^{n-1} \hat \theta \, A_{\hat \theta}^{\dag_{f_{\hat \theta}}} m(\hat \theta, \cdot) \\
         &= \oint_{\mathrlap{S^{n-1}}} \; \, \mathrm d^{n-1} \hat \theta \, F_{\mathcal A}^{-1} \mathrm{diag}_k\left(
            \frac{|k|}{2 \pi \mathcal I[d](k/|k|)}
            \right) F_{\mathcal A}
         A_{\hat \theta}^{\dag_{f_{\hat \theta}}} m(\hat \theta, \cdot) \\
         \shortintertext{then, using lemma~\ref{id:vol_proj_filt},}
         &= \oint_{\mathrlap{S^{n-1}}} \; \, \mathrm d^{n-1} \hat \theta \,
         A_{\hat \theta}^{\dag_{f_{\hat \theta}}} F_{\hat \theta}^{-1} \mathrm{diag}_{k_\perp} \!\!\! \left(
            \frac{|k_\perp|}{2 \pi \mathcal I[d](k_\perp/|k_\perp|)}
            \right) F_{\hat \theta} m(\hat \theta, \cdot) \\
        \shortintertext{then, using definition~\ref{def:full_proj_ft} for $F_{\mathcal M}$,}
        &= \oint_{\mathrlap{S^{n-1}}} \; \, \mathrm d^{n-1} \hat \theta \,
         A_{\hat \theta}^{\dag_{f_{\hat \theta}}} \left(F_{\mathcal M}^{-1} \mathrm{diag}_{k} \!\! \left(
            \frac{|k|}{2 \pi \mathcal I[d](k/|k|)}
            \right) F_{\mathcal M} m \right)(\hat \theta, \cdot) \\
        \shortintertext{then, using lemma~\ref{lem:backproj} once again,}
        &= A^{\dag_f} F_{\mathcal M}^{-1} \mathrm{diag}_{k} \!\! \left(
            \frac{|k|}{2 \pi \mathcal I[d](k/|k|)}
            \right) F_{\mathcal M} m \, .
    \end{align}
    \end{subequations}
    \end{proof}
\end{lemma}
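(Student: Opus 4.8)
The plan is to read this identity as the statement that the backprojection-convolution operator on the left (backproject with $A^{\dag_f}$, then deconvolve in the volume) coincides with the convolution-backprojection operator on the right (deconvolve the projection data, then backproject). To prove it I would transport the volume-domain deconvolution through the backprojection operator until it becomes a projection-domain deconvolution, at which point the backprojection reassembles on the outside. Since this is an equality of operators, I would verify it by applying both sides to an arbitrary $m \in \mathcal M$.

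First I would substitute the explicit Fourier-multiplier form of $(A^{\dag_f} A)^{-1}$ supplied by lemma~\ref{lem:inv_ATA}, so that $(A^{\dag_f} A)^{-1} A^{\dag_f} m = F_{\mathcal A}^{-1}\mathrm{diag}_k\!\big(|k|/(2\pi\,\mathcal I[d](\hat k))\big) F_{\mathcal A}\, A^{\dag_f} m$, where $\hat k = k/|k|$. Next I would expand $A^{\dag_f}$ as the superposition of single-direction backprojections from lemma~\ref{lem:backproj}, writing $A^{\dag_f} m = \oint_{S^{n-1}} \mathrm d^{n-1}\hat\theta\, A_{\hat\theta}^{\dag_{f_{\hat\theta}}} m(\hat\theta,\cdot)$. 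Because the volume deconvolution $F_{\mathcal A}^{-1}\mathrm{diag}_k(\cdots)F_{\mathcal A}$ is a fixed linear operator independent of the integration variable $\hat\theta$, it can be moved inside the integral over $S^{n-1}$.

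The crucial step is then to commute this volume multiplier past each single-direction backprojection $A_{\hat\theta}^{\dag_{f_{\hat\theta}}}$ by invoking identity~\ref{id:vol_proj_filt}, which converts the $n$-dimensional volume filter into the corresponding $(n-1)$-dimensional projection filter $F_{\hat\theta}^{-1}\mathrm{diag}_{k_\perp}(\cdots)F_{\hat\theta}$ — this is exactly the Fourier-slice relationship that a slice of the volume spectrum orthogonal to $\hat\theta$ is the projection spectrum. I expect the main (if minor) obstacle to be the bookkeeping of the weight factors that identity~\ref{id:vol_proj_filt} introduces: for the distributions considered here, $f_{\hat\theta}(r_\perp) = d(\hat\theta)$ is constant in $r_\perp$, so the operators $\mathrm{diag}_{r_\perp}(h(r_\perp))^{\pm 1}$ degenerate to the scalars $d(\hat\theta)^{\pm 1}$, which commute through everything and cancel. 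This is precisely why the clean form applies only for multipliers of the special shape $f(\hat\theta, r_\perp) = d(\hat\theta)$. I would also check that the filter profile survives the restriction to the slice, i.e. that evaluating $g(k) = |k|/(2\pi\,\mathcal I[d](\hat k))$ on the subspace orthogonal to $\hat\theta$ returns $g(k_\perp)$.

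Finally I would observe that applying the per-direction projection filter $F_{\hat\theta}^{-1}\mathrm{diag}_{k_\perp}(\cdots)F_{\hat\theta}$ to $m(\hat\theta,\cdot)$ simultaneously for every $\hat\theta$ is, by definition~\ref{def:full_proj_ft}, exactly the action of $F_{\mathcal M}^{-1}\mathrm{diag}_k(\cdots)F_{\mathcal M}$ on $m$. Thus the filter pulls back out of the integral, and the leftover $\oint_{S^{n-1}} \mathrm d^{n-1}\hat\theta\, A_{\hat\theta}^{\dag_{f_{\hat\theta}}}(\cdots)$ reassembles into $A^{\dag_f}$ by a second application of lemma~\ref{lem:backproj}, producing $A^{\dag_f} F_{\mathcal M}^{-1}\mathrm{diag}_k\!\big(|k|/(2\pi\,\mathcal I[d](\hat k))\big)F_{\mathcal M}\, m$, which is the claimed right-hand side.
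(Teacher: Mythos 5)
Your proposal is correct and follows essentially the same route as the paper's proof: substitute the Fourier-multiplier form of $(A^{\dag_f}A)^{-1}$ from lemma~\ref{lem:inv_ATA}, decompose $A^{\dag_f}$ via lemma~\ref{lem:backproj}, commute the volume filter past each $A_{\hat\theta}^{\dag_{f_{\hat\theta}}}$ using identity~\ref{id:vol_proj_filt}, and reassemble with definition~\ref{def:full_proj_ft} and lemma~\ref{lem:backproj}. Your explicit remarks that the $\mathrm{diag}_{r_\perp}(h)^{\pm1}$ factors degenerate to the scalars $d(\hat\theta)^{\pm1}$ and cancel, and that the filter profile restricts correctly to the slice, are details the paper leaves implicit rather than deviations from its argument.
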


\subsection{Proofs of the main theorems}\label{sec:proofs_of_main_theorems}

\begin{proof}[Proof of theorem~\ref{th:parallel_gbpf}]
    Define the distribution $f$ on $M$ given by $f(\hat \theta, r_\perp) = d(\hat \theta)$.
    Lemma~\ref{lem:inv_ATA} proves that $d(\hat \theta)$ satisfies eq.~\eqref{eq:th:gbpf_existence} precisely when the inverse operator $(A^{\dag_f}A)^{-1}$ exists.
    Lemma~\ref{lem:min_norm} proves that the expression for $\mu$ in eq.~\eqref{eq:th:gbpf_inv_formula} is the unique solution to eq.~\eqref{eq:th:gbpf_err} when the inverse operator $(A^{\dag_f}A)^{-1}$ exists, and that if the inverse operator does not exist, then the solution is not unique.
    Lemma~\ref{lem:inv_ATA_fbp} proves the correctness of the second formulation eq.~\eqref{eq:th:fbp_inv_formula}.
\end{proof}

\begin{proof}[Proof of theorem~\ref{th:cone_gbpf}]
    The proof begins from theorem~\ref{th:parallel_gbpf}, as this is a special case.

    We assume a density of X-ray source points given by a distribution $\chi$ on $\mathbb R^n$, and seek to change the integration measure of the weighted backprojection to $\int_{\mathbb R^n} \mathrm d x \, \chi(x)$.
    For a given point in the volume, $r \in \mathbb R^n$, the determinant of transformation from the $\hat \theta$ parameterisation of lines to the $x$ parameterisation is given by the integral of $\chi$ over the $n$-dimensional volume swept out per unit $(n-1)$-dimensional angle:\footnote{For $n=2$, it is a 2-dimensional volume swept out per radian. For $n=3$, it is a 3-dimensional volume swept out per steradian.}
    \begin{equation}
        {\mathrm d^n x \, \chi(x)}
        \equiv 
        \mathrm d^{n-1} \hat \theta 
            \int_{0}^\infty \mathrm d s \; s^{n-1} \chi(r + s \hat \theta) \, .
    \end{equation}
    Therefore, the weighted backprojection of $m \in \mathcal M$ becomes
    \begin{equation}
                \Biggl\llbracket_{\mathbb R^n}^{\mathbb C} r \mapsto \int_{\mathrlap{\mathbb R^n}} \; \mathrm d^n x \, \chi(x) \;
                \left( \int_{0}^\infty \mathrm d s \; s^{n-1} \chi(r + s \hat \theta) \right)^{-1} \;
                d(\hat \theta) m(\hat \theta, x) \Biggr\rrbracket
                \, ,
    \end{equation}
    where $\hat \theta$ denotes the direction of the vector $(r-x)$.
    However, there is an issue with the practical application of this backprojection formula. The issue is that the inner integral can be $0$---particularly when $r=x$---causing the backprojection to diverge. 
    This is a serious impediment to computationally implementing the backprojection-convolution formula, because the backprojected data is represented as a function on a discrete lattice which is unable to adequately represent such singularities. The subsequent convolution cannot then be numerically evaluated. To be pedantic: this problem may be averted by pathological choices of source densities $\chi(x)$, but it will not be avoided for any reasonable choice.\footnote{
        Let $S \subseteq \mathbb R^n$ be the closure of the chord hull of the support of $\chi$. If the boundary $\partial S$ is non-empty (as it will be for any realistic choice of $\chi$), then the boundary points $r \in \partial S$ will suffer from this issue wherever (if ever) $d(\hat \theta) \neq 0$ in any direction $\hat \theta$ which points outward from the boundary.}
    Fortunately, there is a remedy which allows the backprojection-convolution formula to be salvaged. 
    The singularities are eliminated by exploiting the theoretical symmetry $m(\hat \theta, r_\perp) = m(-\hat \theta, r_\perp)$, i.e. the assumption that measurements are the same in either direction along a projection line. Theoretically this is a valid assumption, and experimentally it is expected to hold at least approximately.
    
    To mitigate the incidence of these problematic singularities, we start again with the original expression for the backprojection,
    \begin{equation}
                \Biggl\llbracket_{\mathbb R^n}^{\mathbb C} r \mapsto \oint_{\mathrlap{S^{n-1}}} \; \mathrm d^{n-1} \hat \theta \; d(\hat \theta) m(\hat \theta, r_\perp) \Biggr\rrbracket \, ,
    \end{equation}
    then we symmetrise it so that it integrates over \emph{undirected lines} pointing in the direction of $\hat \theta$:
    \begin{equation}
        \Biggl\llbracket_{\mathbb R^n}^{\mathbb C} r \mapsto \frac 1 2 \oint_{\mathrlap{S^{n-1}}} \; \mathrm d^{n-1} \hat \theta \; \left(d(\hat \theta) m(\hat \theta, r_\perp) + d(-\hat \theta) m(-\hat \theta, r_\perp)\right) \Biggr\rrbracket \, .
    \end{equation}
    Now, for each $\hat \theta$, measurements are contributed to the integral from both sides of $r$, in the directions of $\pm \hat \theta$.
    A volume element $d^n x$ contributes to the integral over $\mathrm d \hat \theta$ in both the direction of $(r-x)$ and in the opposite direction $(x-r)$.
    The volume differential is equivalent to either of two differentials at different values of $\hat \theta$ viz.
    \begin{equation}
        {\mathrm d^n x \, \chi(x)}
        \equiv 
        \mathrm d^{n-1} \hat \theta_+ 
            \int_{0}^\infty \mathrm d s \; s^{n-1} \chi(r + s \hat \theta_+) 
        \equiv \mathrm d^{n-1} \hat \theta_- 
            \int_{0}^\infty \mathrm d s \; s^{n-1} \chi(r + s \hat \theta_-)\, .
    \end{equation}
    There is the equality $d^{n-1} \hat \theta_- = d^{n-1} \hat \theta_+$ since $\hat \theta_- = - \hat \theta_+$, so by taking the average of the two differentials we may write
    \begin{subequations}
    \begin{align}
        {\mathrm d^n x \, \chi(x)}
        &\equiv \frac 1 2
        \mathrm d^{n-1} \hat \theta 
            \left(
            \int_{0}^\infty \mathrm d s \; s^{n-1} \chi(r + s \hat \theta)
            + \int_{0}^\infty \mathrm d s \; s^{n-1} \chi(r - s \hat \theta)
            \right) \\
        &\equiv 
        \frac 1 2 \mathrm d^{n-1} \hat \theta 
            \int_{-\infty}^\infty \mathrm d s \; |s|^{n-1} \chi(r + s \hat \theta) \, ,
    \end{align}
    \end{subequations}
    and the weighted backprojection becomes
    \begin{equation}
                \Biggl\llbracket_{\mathbb R^n}^{\mathbb C} \!\!\!\!\! r \!\mapsto\!\! \int_{\mathrlap{\mathbb R^n}} \; \mathrm d^n x \, \chi(x) \;
                \left( \frac 1 2\int_{-\infty}^\infty \mathrm d s \; |s|^{n-1} \chi(r + s \hat \theta) \right)^{-1} \;
                \!\!\! \frac 1 2 \left(\!d(\hat \theta) m(\hat \theta, x) + d(-\hat \theta) m(-\hat \theta, x)  \!\right) \! \Biggr\rrbracket
                 .
    \end{equation}
    We then use the symmetry $m(\hat \theta, x) = m(-\hat \theta, x)$ to arrive at
    \begin{equation}
                \Biggl\llbracket_{\mathbb R^n}^{\mathbb C} r \mapsto \int_{\mathrlap{\mathbb R^n}} \; \mathrm d^n x \, \chi(x) \;
                \left( \frac 1 2\int_{-\infty}^\infty \mathrm d s \; |s|^{n-1} \chi(r + s \hat \theta) \right)^{-1} \;
                \frac 1 2 \left(d(\hat \theta) + d(-\hat \theta) \right) m(\hat \theta, x) \Biggr\rrbracket
                \, .
    \end{equation}     
    Note that if $\chi$ has measure zero along a given line, then there's no opportunity for that line to be backprojected with an integration over $x \in X$. Therefore, this expression for the backprojection can only hold when 
    \begin{equation}
        \forall r,\theta, \left( \int_{-\infty}^\infty \mathrm d s \; \chi(r + s \hat \theta) = 0 \Rightarrow\frac 1 2 \left(d(\hat \theta) + d(-\hat \theta) \right) m(\hat \theta, x) = 0 \right) \, .
    \end{equation}
    This is equivalent to the statement that for the line in direction $\hat \theta$ containing $r$, we must have either $m(\hat \theta, r) = 0 $ or $\left(d(\hat \theta) + d(-\hat \theta) \right) = 0$ or $\int_{-\infty}^\infty \mathrm d s \; \chi(r + s \hat \theta) \neq 0$. We assume that lines not intersecting $V$ have $m = 0$.
\end{proof}

\section{Conclusion}
This article provides a self-contained proof for inversion formulae for certain kinds of partial X-ray transforms, in $n \geq 2$ dimensions. While these inversion formulae constitute pure mathematical results which may be of interest to some, our primary motivation was to produce inversion formulae which are computationally expedient for use in X-ray computerised tomography. This has been accomplished by phrasing the inversion formulae as convolution-backprojection and backprojection-convolution methods. Examples have been provided on how these formulae specialise to particular parallel-beam and cone-beam scanning geometries. A symbolic computer implementation for the cone-beam example in \S\ref{sec:ex_cone_cylinder} (cylindrical source locus) is described in \cite{grewar2024preprint}.

\section*{Acknowledgments}\label{sec:acknowledgments}
This research was funded by an Australian Government Research Training Program (RTP) Scholarship.
Thanks to my supervisors Glenn R. Myers and Andrew M. Kingston for their general support and supervision.
Thanks to Zbigniew Stachurski for encouraging me to increase the mathematical rigour of this work and to formalise the proofs.
Thanks to Lindon Roberts for guidance on how to structure a mathematical article of this kind.



\bibliographystyle{unsrt} 
\bibliography{refs}

\vfill

\end{document}